\theoremstyle{definition}
\newtheorem{theorem}{Theorem}[section]
\newtheorem*{exmp*}{Example}
\begin{document}
\title{Linear-optical dynamics of one-dimensional anyons}
\author{Allan D. C. Tosta}
\affiliation{Instituto de F\'isica, Universidade Federal Fluminense, Niter\'oi, RJ, 24210-340, Brazil}
\author{Ernesto F. Galv\~{a}o}
\affiliation{Instituto de F\'isica, Universidade Federal Fluminense, Niter\'oi, RJ, 24210-340, Brazil}
\affiliation{International Iberian Nanotechnology Laboratory (INL), Avenida Mestre José Veiga, 4715-330 Braga, Portugal}
\author{Daniel J. Brod}
\affiliation{Instituto de F\'isica, Universidade Federal Fluminense, Niter\'oi, RJ, 24210-340, Brazil}
\date{\today}

\begin{abstract}
We study the dynamics of bosonic and fermionic anyons defined on a one-dimensional lattice, under the effect of Hamiltonians quadratic in creation and annihilation operators, commonly referred to as linear optics. These anyonic models are obtained from deformations of the standard bosonic or fermionic commutation relations via the introduction of a non-trivial exchange phase between different lattice sites.  We study the effects of the anyonic exchange phase on the usual bosonic and fermionic bunching behaviors. We show how to exploit the inherent Aharonov-Bohm effect exhibited by these particles to build a deterministic, entangling two-qubit gate and prove quantum computational universality in these systems. We define coherent states for bosonic anyons and study their behavior under two-mode linear-optical devices. In particular we prove that, for a particular value of the exchange factor, an anyonic mirror can generate cat states, an important resource in quantum information processing with continuous variables.
\end{abstract}

\maketitle

\section{Introduction}
Quantum computing models based on identical particle systems \cite{knill_scheme_2001,bravyi_fermionic_2002} have been extensively studied in the literature. In the case of bosonic systems, quadratic Hamiltonians (i.e., bosonic linear optics) are generated by beam-splitters over optical modes and, if supplemented with adaptive measurements, are universal for quantum computing \cite{knill_scheme_2001,kok_linear_2007}. For fermions, Hamiltonians of the same form (i.e., fermionic linear optics) cannot be used for universal quantum computing, since they can be simulated efficiently using classical algorithms \cite{terhal_classical_2002}. Classical simulability also holds for a class of quantum circuits called nearest-neighbour matchgates \cite{valiant_quantum_2002}, closely related to free fermions \cite{knill_fermionic_2001}, which has been interpreted as a kind of linear optics for qubits \cite{wu_qubits_2002}.

Over the last five decades, it was shown that fermions and bosons are not the only possible kinds of identical particles in Nature. Many physical systems in two dimensions were shown to contain quasi-particle excitations with anyonic statistics \cite{wilczek_magnetic_1982,nayak_non-abelian_2008}, where wave-functions acquire a non-trivial multiplicative phase under particle exchange. The most striking examples of this are fractional quantum Hall states \cite{stern_anyons_2008}, topological spin liquids \cite{savary_quantum_2016} and semiconductor nanowire arrays \cite{stanescu_majorana_2013}. These systems are possible platforms for fault-tolerant quantum computing \cite{nayak_non-abelian_2008,sarma_majorana_2015} and inspired new forms of quantum error-correcting codes \cite{fowler_surface_2012,brell_generalized_2015-1,litinski_quantum_2018}. 

Although anyons are most commonly associated to two-dimensional systems, they also exist in one dimension. Some arise as dimensional reduction of two-dimensional anyon states \cite{hansson_anyons_1991,ha_fractional_1995}, but most are obtained as free-particle descriptions of exactly solvable models with local two-body interactions in one dimension \cite{lieb_exact_1963-1,calogero_ground_1969,haldane_model_1988,shastry_exact_1988,olshanetsky_quantum_1983}.
The systems we consider are anyons defined via deformed commutation relations \cite{bozejko_completely_1994-1,jorgensen_q-canonical_1994,jorgensen_positive_1995,amico_one-dimensional_1998,osterloh_bethe_2000,osterloh_exact_2000,osterloh_fermionic_2000,osterloh_integrable_2001}, where the $\pm1$ bosonic/fermionic exchange phase is replaced by a non-trivial complex phase. These models play a role in one-dimensional many-body systems with three-body interactions \cite{kundu_exact_1999,batchelor_one-dimensional_2006,batchelor_fermionization_2006,calabrese_correlation_2007,patu_correlation_2007,hao_ground-state_2008,keilmann_statistically_2011} studied in optical lattice implementations \cite{greschner_density-dependent_2014,cardarelli_engineering_2016,liang_floquet_2018,schweizer_floquet_2019}, and can be related to standard fermionic and bosonic systems via generalized Jordan-Wigner transformations \cite{meljanac_r_1994,doresic_generalized_1994,meljanac_unified_1996}.

By analogy with bosonic and fermionic linear optics, we study the dynamics generated by Hamiltonians quadratic in anyonic creation and annihilation operators. We study the effect of the non-trivial exchange phase on the computational complexity of these systems. Guided by the phenomenology of quantum optics, we define anyonic coherent states and study the effect of linear-optical elements on them. In particular, we show a simple protocol for generating anyonic cat states \cite{ralph_quantum_2003}.

This paper is structured as follows. In Section \ref{sec:I}, we give a brief review of bosonic and fermionic linear optics. In Section \ref{sec:II} we review the definition of bosonic and fermionic anyons. We discuss the anyonic generalization of optical networks and solve their dynamics. In Section \ref{sec:III} we show how to use optical networks of bosonic and fermionic anyons to build a protocol for universal quantum computing. Finally, in Section \ref{sec:IV} we give a brief introduction to the theory of optical coherence and the definition of generalized coherent states, investigating their analogue for bosonic anyons.

\section{Linear-optical networks}\label{sec:I}
In this Section we briefly review the theory of bosonic and fermionic linear optics \cite{divincenzo_fermionic_2005,kok_linear_2007}. In subsection \ref{sec:I.1} we review the second quantization formalism for identical particles and the definition of linear optics. In subsection \ref{sec:I.2}, we review some examples of the evolution of multiparticle states in multi-mode linear-optical devices. 

\subsection{Review of linear optics}\label{sec:I.1}
Consider a single spinless non-relativistic quantum particle on a one-dimensional lattice with open boundary conditions. Its Hilbert space is spanned by a discrete position basis, where we identify each position with a different mode. We denote this Hilbert space $\mathcal{H}_{m}$, where $m$ is the number of modes. From this space, the second quantization formalism gives a procedure for generating the Fock space for a system of such particles.

For bosons, the Fock space basis is built by acting with a set of operators $\{\hat{b}^{\dagger}_{i},i=1,...,m\}$ on a reference vacuum state $\ket{0}_b$. These operators satisfy the following commutation relations
\begin{subequations}\label{eq1}
\begin{align}
&\hat{b}_{i}\hat{b}^{\dagger}_{j}-\hat{b}^{\dagger}_{j}\hat{b}_{i}=\delta_{ij},\label{eq1a}\\
&\hat{b}_{i}\hat{b}_{j}-\hat{b}_{j}\hat{b}_{i}=0,\label{eq1b}\\
&\hat{b}^{\dagger}_{i}\hat{b}^{\dagger}_{j}-\hat{b}^{\dagger}_{j}\hat{b}^{\dagger}_{i}=0\label{eq1c},
\end{align}
\end{subequations}
for all pairs of modes $i$ and $j$. The basis vectors of the bosonic Fock space are given by
\begin{equation*}
\ket{{n}_{1},...,{n}_{m}}_{b}=\frac{(\hat{b}^{\dagger}_{1})^{{n}_{1}}...(\hat{b}^{\dagger}_{m})^{{n}_{m}}}{\sqrt{{n}_{1}!...{n}_{m}!}}\ket{0}_b,   
\end{equation*}
where ${n}_{i}$ is the eigenvalue of the number operator $\hat{n}_{i}=\hat{b}^{\dagger}_{i}\hat{b}_{i}$. This is also called the occupation number basis. 

From Eqs.\ (\ref{eq1}) follows the action of the bosonic operators on the occupation number basis:
\begin{align*}
\hat{b}_{i}\ket{{n}_{1},...,{n}_{i},...,{n}_{m}}_{b}&=\sqrt{{n}_{i}}\ket{{n}_{1},...,{n}_{i}-1,...,{n}_{m}}_{b},\\
\hat{b}^{\dagger}_{i}\ket{{n}_{1},...,{n}_{i},...,{n}_{m}}_{b}&=\sqrt{{n}_{i}+1}\ket{{n}_{1},...,{n}_{i}+1,...,{n}_{m}}_{b}.
\end{align*}
These expressions explain why $\hat{b}_{i}$ and $\hat{b}^{\dagger}_{i}$ are known as particle annihilation and creation operators, respectively.

For fermions, the Fock space is generated by a set of operators $\hat{f}^{\dagger}_{i}$, $\hat{f}_{i}$ that satisfy the anti-commutation relations
\begin{subequations}\label{eq6}
\begin{align}
&\hat{f}_{i}\hat{f}^{\dagger}_{j}+\hat{f}^{\dagger}_{j}\hat{f}_{i}=\delta_{ij},\label{eq6a}\\    
&\hat{f}_{i}\hat{f}_{j}+\hat{f}_{j}\hat{f}_{i}=0,\label{eq6b}\\
&\hat{f}^{\dagger}_{i}\hat{f}^{\dagger}_{j}+\hat{f}^{\dagger}_{j}\hat{f}^{\dagger}_{i}=0,\label{eq6c}
\end{align}
\end{subequations}
for all pairs of modes $i,j$. The number operator for mode $i$ is $\hat{n}_{i}=\hat{f}^{\dagger}_{i}\hat{f}_{i}$ and the vacuum state is $\ket{0}_{f}$. The Fock basis for fermions comprises the states
\begin{equation}
\ket{{n}_{1},...,{n}_{m}}_{f}=(\hat{f}^{\dagger}_{1})^{{n}_{1}}...(\hat{f}^{\dagger}_{m})^{{n}_{m}}\ket{0}_f, \label{eq:fermionFock}
\end{equation} 
where ${n}_{j}$ are eigenvalues of the corresponding number operators. From the anti-commutation relations it follows that the fermionic occupation numbers $n_{i}$ can only be $0$ or $1$.

For  both fermions and bosons, all observables can be written in terms of the particle operators.  Here we are interested in quadratic Hamiltonians of the form
\begin{equation}\label{eq_hamiltonian}
H=\sum_{i}a_{i}\hat{x}^{\dagger}_{i}\hat{x}_{i}+\sum_{i\neq j}{b}_{ij}\hat{x}^{\dagger}_{i}\hat{x}_{j}+\sum_{i\neq j}\left(c_{ij}\hat{x}^{\dagger}_{i}\hat{x}^{\dagger}_{j}+c^{*}_{ij}\hat{x}_{j}\hat{x}_{i}\right).
\end{equation}
Operators denoted as $\hat{x}$ correspond to either bosonic or fermionic operators, and we use this notation for expressions valid for both types of particles. Since  $H$ should be Hermitian, the coefficients $a_{i}$ are all real and ${b}_{ij}={b}^{*}_{ji}$.

The set of all Hamiltonians of the type above is closed under commutation and linear combinations, which gives it a Lie algebra structure. A convenient choice of generators is given by the quadratic operators $\hat{x}^{\dagger}_{i}\hat{x}_{j}$, $\hat{x}^{\dagger}_{i}\hat{x}^{\dagger}_{j}$ and $\hat{x}_{i}\hat{x}_{j}$, for all pairs $i,j$. In this work we focus on the sub-algebra of Hamiltonians that preserve the total number of particles. Such Hamiltonians are called number-preserving or passive, and correspond to setting $c_{ij}=0$ for all $i,j$ in Eq.\ (\ref{eq_hamiltonian}). From here on, we assume all Hamiltonians are passive unless stated otherwise.

From the commutation and anti-commutation relations it follows that the evolution operator $\hat{U}=\exp{i\theta H}$ corresponding to a passive Hamiltonian $H$ must, for any $\theta$, act on particle operators as
\begin{equation}\label{eq-7}
\hat{U}\hat{x}^{\dagger}_{i}\hat{U}^{\dagger}=\sum_{j}U_{i,j}\hat{x}^{\dagger}_{j},
\end{equation}
with a unitary matrix of coefficients $U=[U_{i,j}]$. Transformations of this type are known as passive linear dynamical maps, as they take creation operators into linear combinations of themselves. We refer to them as linear dynamics or linear optics for short. An arbitrary linear dynamics is uniquely determined by the $m\cross m$ matrix $U$ in Eq.\ (\ref{eq-7}). This definition is very useful, as we will see, since it allows us to obtain an evolved Fock state by direct substitution of the evolved particle operators.

In \cite{reck_experimental_1994}, it was shown that any $m\cross m$ unitary matrix can be decomposed as a product of $m(m-1)/2$ elementary transformations acting nontrivially only on two modes each. These elementary transformations can be implemented by combinations of simple optical devices known as phase shifters and beam splitters, which in turn can be represented as
\begin{align*}
{PS}_{i}(\tau)&=\exp\left(i\tau \hat{x}^{\dagger}_{i}\hat{x}_{i} \right),\\
{BS}_{ij}(\theta)&=\exp\left[i\theta(\hat{x}^{\dagger}_{i}\hat{x}_{j}+\hat{x}^{\dagger}_{j}\hat{x}_{i})\right].
\end{align*}
Their action on particle operators is given by
\begin{subequations}\label{eq-9}
\begin{align}
{PS}_{i}(\tau){x}^{\dagger}_{j}{PS}_{i}(-\tau)&=e^{i\tau\delta_{ij}}{x}^{\dagger}_{j}\label{eq-9a}\\
{BS}_{ij}(\theta)
\left[
\begin{array}{c}
{x}^{\dagger}_{i}\\
{x}^{\dagger}_{j}
\end{array}
\right]
{BS}_{ij}(-\theta)&=
\left[
\begin{array}{cc}
\cos{\theta} & i\sin{\theta}\\
i\sin{\theta} & \cos{\theta}
\end{array}
\right]
\left[
\begin{array}{c}
{x}^{\dagger}_{i}\\
{x}^{\dagger}_{j}
\end{array}
\right]
.
\end{align}\label{eq-9b}
\end{subequations}
In other words, the phase shifter adds a phase onto one mode relative to the others, and the beam splitter can place one particle in a superposition of being in different modes. The nomenclature for these devices is borrowed from optics, but they are well-defined transformations for any type of identical particle. This is analogous to what happens when an incident particle strikes a potential barrier, where its wave-function decomposes into a combination of transmitted and reflected waves.

Phase shifters and beam splitters can be combined into larger linear-optical networks and generate all transformations of the type of Eq.\ (\ref{eq-7}). We call these networks linear multi-mode interferometers, or simply interferometers for short. 

So far our description was completely equivalent for both bosons and fermions, but the action of these interferometers over multi-particle states of each type lead to many differences, which we explore next.

\subsection{Two-mode transformations and bunching behavior}\label{sec:I.2}
Consider a scenario where two particles must be sent through a 50:50 beam splitter [i.e.,  $\theta=\pi/4$ in Eq.\ (\ref{eq-9b}b)], for either fermions or bosons. The initial two-particle states are written as $\ket{1,1}_{f}=\hat{f}^{\dagger}_{1}\hat{f}^{\dagger}_{2}\ket{0}_f$ and  $\ket{1,1}_{b}=\hat{b}^{\dagger}_{1}\hat{b}^{\dagger}_{2}\ket{0}_b$, respectively. We now want to compare the corresponding output states. From Eqs.\ (\ref{eq-7}) and (\ref{eq-9b}b) we can write
\begin{align*}
BS_{12}\left(\tfrac{\pi}{4}\right)\ket{1,1}_{x}&=BS_{12}\left(\tfrac{\pi}{4}\right)\hat{x}^{\dagger}_{1}\hat{x}^{\dagger}_{2}\ket{0}_x\\
&=\frac{1}{2}\left[i(\hat{x}^{\dagger}_{1})^{2}+\hat{x}^{\dagger}_{1}\hat{x}^{\dagger}_{2}-\hat{x}^{\dagger}_{2}\hat{x}^{\dagger}_{1}+i(\hat{x}^{\dagger}_{2})^{2}\right]\ket{0}_x.
\end{align*}
where, recall, $x$ can denote either fermionic or bosonic operators. If we replace $x$ by $f$ and use the anti-commutation relations, we obtain
\begin{equation*}
BS_{12}\left(\tfrac{\pi}{4}\right)\ket{1,1}_{f}=\ket{1,1}_{f}.
\end{equation*}
In other words, the beam splitter is effectively transparent to the state $\ket{1,1}_{f}$. This is a manifestation of the Pauli exclusion principle.

Similarly, exchanging $x$ by $b$ and using the bosonic commutation relations we obtain
\begin{equation*}
BS_{12}\left(\tfrac{\pi}{4}\right)\ket{1,1}_{b}=\frac{i}{\sqrt{2}}\left(\ket{2,0}_{b}+\ket{0,2}_{b}\right).
\end{equation*}
In constrast to the fermionic case, the two bosons always exit together, in one mode or the other. This is known as the Hong-Ou-Mandel effect \cite{hong_measurement_1987}. These two effects are particular cases of the more general tendency of fermions and bosons to display, respectively, anti-bunching and bunching behaviors.

The generalization of these results to a larger interferometer $U$ is well-known and can be found in \cite{scheel_permanents_2004,terhal_classical_2002}. Specifically, if we have $n$ bosons (fermions) their transition amplitudes are given by permanents (determinants) of particular $n \times n$ submatrices of $U$. Interestingly, determinants are easy matrix functions to compute, whereas permanents are presumed to be very hard. This underpins a major difference in the computational power of bosonic and fermionic linear optics: fermionic linear optics can be simulated efficiently on a classical computer \cite{terhal_classical_2002}, whereas there is evidence that a bosonic linear-optical device cannot \cite{aaronson_computational_2011}.  However, neither is expected to be universal for quantum computing unless supplemented with further resources.

\section{One-dimensional anyons and optical networks}\label{sec:II}
In this Section we define the anyons we consider. There are several ways of defining anyons in one-dimensional systems \cite{hansson_anyons_1991,ha_fractional_1995,amico_one-dimensional_1998,kundu_exact_1999}, but we are interested in the so-called bosonic and fermionic anyons \cite{keilmann_statistically_2011}. Both are described in second quantization formalism with a Fock space basis generated by creation and annihilation operators.

In Section \ref{sec:II.1} we review the definition of bosonic and fermionic anyons and how they relate to standard bosons and fermions via generalized Jordan-Wigner transformations. In Section \ref{sec:II.2} we study the algebra of quadratic number-preserving Hamiltonians for anyons. Finally in Section \ref{II.3}, we define phase shifters and beam splitters for anyons, and how they act on anyonic Fock states.

\subsection{One-dimensional anyons}\label{sec:II.1}
Given a 1D lattice with $m$ sites, we denote the creation and annihilation operators for bosonic anyons by $\hat{\beta}^{\dagger}_{i}$ and $\hat{\beta}_{i}$ with $i=1,...,m$. They satisfy the deformed canonical quantization relations
\begin{subequations}\label{eq-16}
\begin{align}
&\hat{\beta}_{i}\hat{\beta}^{\dagger}_{j}-e^{-i\varphi\epsilon_{i,j}}\hat{\beta}^{\dagger}_{j}\hat{\beta}_{i}=\delta_{ij},\label{eq-16a}\\
&\hat{\beta}_{i}\hat{\beta}_{j}-e^{i\varphi\epsilon_{i,j}}\hat{\beta}_{j}\hat{\beta}_{i}=0,\label{eq-16b}\\
&\hat{\beta}^{\dagger}_{i}\hat{\beta}^{\dagger}_{j}-e^{i\varphi\epsilon_{i,j}}\hat{\beta}^{\dagger}_{j}\hat{\beta}^{\dagger}_{i}=0\label{eq-16c},
\end{align}
\end{subequations}
where $\epsilon_{i,j}$ is the sign of $j-i$, or $0$ if $i=j$. Notice that when $i=j$, we re-obtain the same-site bosonic commutation relations in Eqs.\ (\ref{eq1}). Particles obeying such commutation relations have been defined since the 1990's \cite{bozejko_completely_1994-1,jorgensen_q-canonical_1994,jorgensen_positive_1995}. They are related to operators for standard bosons by a transformation called a generalized Jordan-Wigner map $J_{\varphi}$ \cite{doresic_generalized_1994,meljanac_r_1994}, given by
\begin{subequations}
\begin{align*}
\hat{b}^{\dagger}_{i}\overset{J_{\varphi}}{\rightarrow}\hat{\beta}^{\dagger}_{i}&=\exp\left(-i\varphi\sum^{i-1}_{k=1}\hat{b}^{\dagger}_{k}\hat{b}_{k}\right)\hat{b}^{\dagger}_{i}\\
\hat{b}_{i}\overset{J_{\varphi}}{\rightarrow}\hat{\beta}_{i}&=\exp\left(i\varphi\sum^{i-1}_{k=1}\hat{b}^{\dagger}_{k}\hat{b}_{k}\right)\hat{b}_{i}.
\end{align*}
\end{subequations}
From these equations, and using the fact that $J_{\varphi}$ is an algebra homomorphism, i.e.\ $J_{\varphi}(ab)=J_{\varphi}(a)J_{\varphi}(b)$, it follows that $J_{\varphi}(\hat{b}^{\dagger}_{i}\hat{b}_{i})=\hat{\beta}^{\dagger}_{i}\hat{\beta}_{i}$. 

From Eqs.\ (\ref{eq-16}), it turns out that the operator $\hat{\beta}^{\dagger}_{i}\hat{\beta}_{i}$ shares properties of the bosonic operators $\hat{n}_{i}=\hat{b}^{\dagger}_{i}\hat{b}_{i}$. Given a vacuum state $\ket{0}_{\beta}$, the Fock space basis for bosonic anyons has the form
\begin{equation}
\ket{{n}_{1},...,{n}_{m}}_{\beta}=\frac{(\hat{\beta}^{\dagger}_{1})^{{n}_{1}}...(\hat{\beta}^{\dagger}_{m})^{{n}_{m}}}{\sqrt{{n}_{1}!...{n}_{m}!}}\ket{0}_\beta,   
\end{equation}
as in the standard bosonic case, where now $n_{i}$ are the eigenvalues of $\hat{n}_{i}=\hat{\beta}^{\dagger}_{i}\hat{\beta}_{i}$.

The action of particle operators on Fock basis states, however, is quite different. The exchange factors from the commutation relations appear, leading to
\small
\begin{align*}
\hat{\beta}_{i}\ket{{n}_{1},...,{n}_{m}}_{\beta}&=e^{i\varphi_{i}}\sqrt{{n}_{i}}\ket{{n}_{1},...,{n}_{i}-1,...,{n}_{m}}_{\beta},\\
\hat{\beta}^{\dagger}_{i}\ket{{n}_{1},...,{n}_{m}}_{\beta}&=e^{-i\varphi_{i}}\sqrt{{n}_{i}+1}\ket{{n}_{1},...,{n}_{i}+1,...,{n}_{m}}_{\beta},
\end{align*}
\normalsize
where the total phase $\varphi_{i}$ is equal to $\varphi\sum^{i-1}_{k=1}n_{k}$.

Fermionic anyons are defined in a similar way. We use the notation $\hat{\xi}^{\dagger}_{i}$ and $\hat{\xi}_{i}$ to refer to their creation and annihilation operators. The deformed anti-commutation relations are
\begin{subequations}\label{eq-20}
\begin{align}
\hat{\xi}_{i}\hat{\xi}^{\dagger}_{j}&+e^{-i\varphi\epsilon_{i,j}}\hat{\xi}^{\dagger}_{j}\hat{\xi}_{i}=\delta_{ij},\label{eq-20a}\\    
\hat{\xi}_{i}\hat{\xi}_{j}&+e^{i\varphi\epsilon_{i,j}}\hat{\xi}_{j}\hat{\xi}_{i}=0,\label{eq-20b}\\
\hat{\xi}^{\dagger}_{i}\hat{\xi}^{\dagger}_{j}&+e^{i\varphi\epsilon_{i,j}}\hat{\xi}^{\dagger}_{j}\hat{\xi}^{\dagger}_{i}=0.\label{eq-20c}
\end{align}
\end{subequations}
As expected, when we take $i=j$ we recover the single-mode canonical anti-commutation relations. 

As in the case of bosonic anyons, the generalized Jordan-Wigner transform applied on standard fermionic operators gives us
\begin{subequations}
\begin{align*}
\hat{f}^{\dagger}_{i}\overset{J_{\varphi}}{\rightarrow}\hat{\xi}^{\dagger}_{i}&=\exp\left(-i\varphi\sum^{i-1}_{k=1}\hat{f}^{\dagger}_{k}\hat{f}_{k}\right)\hat{f}^{\dagger}_{i}\\
\hat{f}_{i}\overset{J_{\varphi}}{\rightarrow}\hat{\xi}^{i}&=\exp\left(i\varphi\sum^{i-1}_{k=1}\hat{f}^{\dagger}_{k}\hat{f}_{k}\right)\hat{f}_{i},
\end{align*}
\end{subequations}
and we also have $J_{\varphi}(\hat{f}^{\dagger}_{i}\hat{f}_{i})=\hat{\xi}^{\dagger}_{i}\hat{\xi}_{i}$. This means that the Fock space basis states have the same structure as in the fermionic case, simply exchanging $f$ by $\xi$ in Eq.\ (\ref{eq:fermionFock}).

\subsection{The algebra of quadratic anyonic operators}\label{sec:II.2}
We now consider the algebra of quadratic Hamiltonians for anyons. As before, whenever we have an expression valid for both bosonic and fermionic anyons we express the particle operators using $\chi$, which can then be replaced $\beta$ or $\xi$, as appropriate.

Recall that all observables can be written in terms of the particle operators, and we can define the quadratic, number-preserving Hamiltonians as
\begin{equation}\label{eq_hamiltonian_anyon}
H=\sum_{i}a_{i}\hat{\chi}^{\dagger}_{i}\hat{\chi}_{i}+\sum_{i\neq j}b_{ij}\hat{\chi}^{\dagger}_{i}\hat{\chi}_{j}.
\end{equation}
As before, the coefficients $\{a_{i},b_{ij}\}$ are such that $a_{i}$ are all real and $b_{ij}=b^{*}_{j,i}$, so that $H$ is Hermitian.

The first difference to the standard bosonic and fermionic Hamiltonians arises when we consider the closure of these operators. Though they form a closed set as a vector space, they are not closed under commutation. To see that, compare the commutators of quadratic operators for standard ($x$) and anyonic ($\chi$) particles:
\begin{align*}
[\hat{x}^{\dagger}_{i}\hat{x}_{j};\hat{x}^{\dagger}_{k}\hat{x}_{l}]&=\delta_{j,k}\hat{x}^{\dagger}_{i}\hat{x}_{l}-\delta_{i,l}\hat{x}^{\dagger}_{k}\hat{x}_{j}\\
[\hat{\chi}^{\dagger}_{i}\hat{\chi}_{j};\hat{\chi}^{\dagger}_{k}\hat{\chi}_{l}]&=\delta_{j,k}\hat{\chi}^{\dagger}_{i}\hat{\chi}_{l}-\delta_{i,l}\hat{\chi}^{\dagger}_{k}\hat{\chi}_{j}+\Delta^{\chi}_{i,j,k,l}\hat{\chi}^{\dagger}_{i}\hat{\chi}^{\dagger}_{j}\hat{\chi}_{k}\hat{\chi}_{l},
\end{align*}
where $\Delta^{\chi}_{i,j,k,l}$ is given by
\begin{equation*}
\Delta^{\chi}_{i,j,k,l}=
\begin{cases}
e^{-i\varphi\epsilon_{j,k}}-e^{-i\varphi(\epsilon_{l,i}-\epsilon_{k,i}-\epsilon_{l,j})},\text{ if }\chi=\beta,\\
-e^{-i\varphi\epsilon_{j,k}}+e^{-i\varphi(\epsilon_{l,i}-\epsilon_{k,i}-\epsilon_{l,j})},\text{ if }\chi=\xi.
\end{cases}
\end{equation*}
In other words, differently from standard bosons and fermions, the commutator of quadratic anyonic operators is not itself a quadratic operator, and so this algebra is not closed. Several consequences follow from this fact. The most important is that quadratic anyonic Hamiltonians are non-linear dynamical maps by default. In fact, the closure of the algebra of passive quadratic anyonic Hamiltonians must include number-preserving interaction terms of all even orders. 

Let us now restrict the algebra to Hamiltonians acting only on two modes. Define the operators
\begin{subequations}
\begin{align}
&J^{1}_{ij}=\frac{1}{2}\left(\hat{\chi}^{\dagger}_{i}\hat{\chi}_{j}+\hat{\chi}^{\dagger}_{j}\hat{\chi}_{i}\right), \label{eq:su2J1}\\
&J^{2}_{ij}=-\frac{i}{2}\left(\hat{\chi}^{\dagger}_{i}\hat{\chi}_{j}-\hat{\chi}^{\dagger}_{j}\hat{\chi}_{i}\right),\\
&J^{3}_{ij}=\frac{1}{2}\left(\hat{\chi}^{\dagger}_{i}\hat{\chi}_{i}-\hat{\chi}^{\dagger}_{j}\hat{\chi}_{j}\right),
\end{align}
\end{subequations}
where $i,j$ are fixed indices. Then we have that 
\begin{equation} \label{eq:su2}
[J^{k}_{ij};J^{l}_{ij}]=i\epsilon_{klm}J^{m}_{ij},
\end{equation}
for all $k,l,m=1,...,3$, as proven in Appendix \ref{app:su2}. 

We conclude that it is not possible to talk about anyonic inteferometers, in general, using the algebra of passive quadratic Hamiltonians. Nonetheless, we can model the action dynamics of \emph{networks} of two-mode anyonic interferometers (the analogues of beam splitters and phase shifters), where the dynamics acting on each mode pair can be modeled by an $SU(2)$ algebra. This observation forms the core of this work.

\subsection{Optical networks for anyons}\label{II.3}
Let us now describe general optical networks for bosonic and fermionic anyons, and investigate how their behaviours compare to standard bosons and fermions.

Since the algebra of two-mode quadratic Hamiltonians is the same for standard and anyonic particles, it makes sense to define the optical elements for anyons analogously to those of fermions and bosons. Thus we have the anyonic phase shifters and beam splitters described respectively by
\begin{subequations}
\begin{align}
{PS}_{i}(\tau)&=\exp\left(i\tau\hat{\chi}^{\dagger}_{i}\hat{\chi}_{i}\right),\\
{BS}_{ij}(\theta)=&\exp\left[i\theta(\hat{\chi}^{\dagger}_{i}\hat{\chi}_{j}+\hat{\chi}^{\dagger}_{j}\hat{\chi}_{i})\right].\label{eq:anyonbs}
\end{align}
\end{subequations}

For standard particles, the result of Reck \textit{et al.} \cite{reck_experimental_1994} shows how to build an arbitrary $m$-mode interferometer using a network of O$(m^2)$ beam splitters and phase shifters. However, as we discussed, the algebra of anyonic quadratic Hamiltonians does not close. Therefore, a simple result similar to \cite{reck_experimental_1994} cannot hold for these particles, and so we choose to \emph{define} multimode anyonic interferometers in terms of optical networks of two-modes directly, even if they do not correspond to dynamics generated by quadratic Hamiltonians.

To see the implications of this definition, we begin by analyzing the dynamics of particle operators under the two-mode optical elements. The action of phase shifters is simple, since it is equivalent to that of standard particles:
\begin{equation*}
{PS}_{i}(\tau)\hat{\chi}^{\dagger}_{j}{PS}_{i}(-\tau)=e^{i\tau\delta_{ij}}\hat{\chi}^{\dagger}_{j}.
\end{equation*}

The action of beam splitters, on the other hand, is trickier and requires a separate analysis for fermionic and bosonic anyons.

In the fermionic case, the Pauli exclusion principle, together with the $SU(2)$ algebra, implies that
\begin{equation*}
{BS}_{ij}(\theta)=1+i\sin{\theta}(\hat{\xi}^{\dagger}_{i}\hat{\xi}_{j}+\hat{\xi}^{\dagger}_{j}\hat{\xi}_{i})+(\cos{\theta}-1)(\hat{\xi}^{\dagger}_{i}\hat{\xi}_{j}+\hat{\xi}^{\dagger}_{j}\hat{\xi}_{i})^{2}.
\end{equation*}
The canonical relations then imply the following identities 
\begin{subequations}
\begin{align*}
&{BS}_{ij}(\theta)\hat{\xi}^{\dagger}_{i}{BS}_{ij}(-\theta)=\cos{\theta}\hat{\xi}^{\dagger}_{i}+i\sin{\theta}\hat{\xi}^{\dagger}_{j}e^{i\varphi\hat{\xi}^{\dagger}_{i}\hat{\xi}_{i}},\\
&{BS}_{ij}(\theta)\hat{\xi}^{\dagger}_{j}{BS}_{ij}(-\theta)=\cos{\theta}\hat{\xi}^{\dagger}_{j}+i\sin{\theta}\hat{\xi}^{\dagger}_{i}e^{-i\varphi\hat{\xi}^{\dagger}_{j}\hat{\xi}_{j}}.
\end{align*}
\end{subequations}
Notice that the beam splitter action is non-linear, as expected. This gives a simpler derivation of our previous result in \cite{tosta_quantum_2019}, where we obtained these identities by solving the Heisenberg equations of motion.

For bosonic anyons the derivation is more involved, especially because we cannot rely on the Pauli exclusion principle to limit the degree of polynomials that appear. To begin, note that the beam splitter Hamiltonian in Eq.\ (\ref{eq:anyonbs}) is equal to $2J^{1}_{ij}$ from Eq.\ (\ref{eq:su2J1}). From the canonical relations we obtain
\begin{subequations}
\begin{align*}
(2{J}^{1}_{ij})\hat{\beta}^{\dagger}_{i}=&\hat{\beta}^{\dagger}_{i}\left[2(\cos{\varphi}{J}^{1}_{ij}-\sin{\varphi}{J}^{2}_{ij})\right]+\hat{\beta}^{\dagger}_{j},\\
(2{J}^{1}_{ij})\hat{\beta}^{\dagger}_{j}=&\hat{\beta}^{\dagger}_{j}\left[2(\cos{\varphi}{J}^{1}_{ij}-\sin{\varphi}{J}^{2}_{ij})\right]+\hat{\beta}^{\dagger}_{i}.
\end{align*}
\end{subequations}
Repeatedly multiplying these equations by $2J^{1}_{ij}$ on the left, we can recursively exponentiate $2J^{1}_{ij}$. This leads to the \emph{propagation identities} (proven in appendix \ref{app:propagation})
\begin{subequations}
\begin{align}\label{eq-32a}
{\hat{G}}^{n\varphi}_{ij}(\theta)\hat{\beta}^{\dagger}_{i}=&(\cos{\theta}\hat{\beta}^{\dagger}_{i}+ie^{-in\varphi}\sin{\theta}\hat{\beta}^{\dagger}_{j})\hat{G}^{(n+1)\varphi}_{ij}(\theta),\\
\label{eq-32b}{\hat{G}}^{n\varphi}_{ij}(\theta)\hat{\beta}^{\dagger}_{j}=&(\cos{\theta}\hat{\beta}^{\dagger}_{j}+ie^{in\varphi}\sin{\theta}\hat{\beta}^{\dagger}_{i})\hat{G}^{(n+1)\varphi}_{ij}(\theta),
\end{align}
\end{subequations}
where the operator ${\hat{G}}^{n\varphi}_{ij}(\theta)$ is given by
\begin{equation*}
{\hat{G}}^{n\varphi}_{ij}(\theta)=e^{in\varphi{J}^{3}_{ij}}BS_{ij}(\theta)e^{-in\varphi{J}^{3}_{ij}}.
\end{equation*}

From these identities, one can propagate the action of a beam splitter through any polynomial of particle operators.

Up to now we have investigated only the action of beam splitter $BS_{ij}$ on the operators of modes $i$ and $j$. For standard particles, this is the only non-trivial dynamics, but for anyonic particles the situation is quite different.

If $i$ and $j$ are not nearest neighbours, there exists at least one mode $k$ in between $i$ and $j$.  The canonical relations for both fermionic and bosonic anyons imply that
\begin{equation}\label{bs_aharonov_mode}
\hat{G}^{n\varphi}_{ij}(\theta)\hat{\chi}^{\dagger}_{k}=\hat{\chi}^{\dagger}_{k}\hat{G}^{(n+2)\varphi}_{ij}(\theta).
\end{equation}
Here we used the general definition of beam splitters and phase shifters to extend the definition of $\hat{G}^{n\varphi}_{ij}$ to the fermionic case. 

We have now given all ingredients to compute the dynamics of bosonic and fermionic anyon states under networks of optical elements. Let us discuss a few examples to illustrate.

Consider first a three mode lattice where we input one bosonic anyon in the first mode and vacuum in the third, and  a general beam splitter acting between modes 1 and 3. Consider that there might or not be a second particle in mode 2. We write the input state as $\ket{1,n,0}_{\beta}$, where  $n\in\{0,1\}$. The action of the beam splitter on this input state is given by
\begin{equation*}
BS_{13}(\theta)\ket{1,n,0}_{\beta}
=\cos{\theta}\ket{1,n,0}_{\beta}+ie^{-i n\varphi}\sin{\theta}\ket{0,n,1}_{\beta},
\end{equation*}
which follows from the repeated application of Eq. (\ref{bs_aharonov_mode}). Similarly, for fermionic anyons initialized on the state $\ket{1,n,0}_{\xi}$ we have that
\begin{equation*}
BS_{13}(\theta)\ket{1,n,0}_{\xi}=\cos{\theta}\ket{1,n,0}_{\xi}+i e^{-i n(\varphi+\pi)}\sin{\theta}\ket{0,n,1}_{\xi}.
\end{equation*}

The above example shows that, when acting with beam splitters between distant modes, the presence of anyons on intermediate modes induces an additional relative phase.  This is analogous to what is expected from the anyonic Aharonov-Bohm effect in two-dimensions \cite{wilczek_magnetic_1982} and in one-dimensional rings of lattice anyons \cite{haug_aharonov-bohm_2019}. These phases are a manifestation of the nonlinear character of anyonic beam splitters, and shows that they are also effectively nonlocal, in the sense that they depend on the presence of particles in modes where they are not obviously acting. These effects have important consequences that we will explore shortly.

Let us now consider whether the exchange phase affects the bunching behaviors of bosons and fermions discussed in Section \ref{sec:I.2}. To that end, let us first compute the effect of a balanced beam splitter acting on the input state $\ket{1,1}_{\beta}$ of bosonic anyons. The balanced beam splitter operator is given by $e^{i(\pi/2)J_{12}^{1}}$, and so we have
\small
\begin{align*}
e^{i\tfrac{\pi}{2}J_{12}^{1}}\hat{\beta}^{\dagger}_{1}\hat{\beta}^{\dagger}_{2}\ket{0}_{\beta} 
&=\frac{1}{\sqrt{2}}\left(\hat{\beta}^{\dagger}_{1}+i \hat{\beta}^{\dagger}_{2}\right)\hat{G}^{\varphi}_{12}\left(\tfrac{\pi}{4}\right)\hat{\beta}^{\dagger}_{2}\ket{0}_{\beta} \notag\\
& =\frac{1}{2}\left(\hat{\beta}^{\dagger}_{1}+i\hat{\beta}^{\dagger}_{2}\right)\left(i{e}^{i\varphi}\hat{\beta}^{\dagger}_{1}+\hat{\beta}^{\dagger}_{2}\right)\hat{G}^{2\varphi}_{12}\left(\tfrac{\pi}{4}\right)\ket{0}_{\beta} \notag\\
&=\frac{i}{\sqrt{2}}(e^{i\varphi}\ket{2,0}_{\beta}+\ket{0,2}_{\beta}),
\end{align*}
\normalsize
where the last equality follows from the commutation relations and the fact that $\hat{G}^{2\varphi}_{12}(\theta)$ acts trivially on the vacuum state.
Note that this recovers the original bosonic Hong-Ou-Mandel effect when $\varphi=0$, as expected. Interestingly, however the $\ket{1,1}$ state is still suppressed for any value of the exchange phase. The only difference to the bosonic case is a relative phase between states $\ket{2,0}_\beta$ and $\ket{0,2}_\beta$. 

Similarly, for fermionic anyons, a general beam splitter acting on $\ket{1,1}_{\xi}$ produces the output
\small
\begin{align*}
BS_{12}(\theta) \hat{\xi}^{\dagger}_{1}\hat{\xi}^{\dagger}_{2}\ket{0}_{\xi}
& =\left(\cos^{2}{\theta}\hat{\xi}^{\dagger}_{1}\hat{\xi}^{\dagger}_{2}-\sin^{2}{\theta}\hat{\xi}^{\dagger}_{2}e^{i\varphi\hat{\xi}^{\dagger}_{1}\hat{\xi}_{1}}\hat{\xi}^{\dagger}_{2}e^{-i\varphi\hat{\xi}^{\dagger}_{2}\hat{\xi}_{2}}\right)\ket{0}_{\xi}\\
&=\left(\cos^{2}{\theta}\hat{\xi}^{\dagger}_{1}\hat{\xi}^{\dagger}_{2}-\sin^{2}{\theta}e^{i\varphi}\hat{\xi}^{\dagger}_{2}\hat{\xi}^{\dagger}_{1}\right)\ket{0}_{\xi}\\
&=\hat{\xi}^{\dagger}_{1}\hat{\xi}^{\dagger}_{2}\ket{0}_{\xi}.
\end{align*}
\normalsize
In other words, we observe the Pauli exclusion principle for fermionic anyons in the same way as for standard fermions.

This example shows that, at least at the level of a single beam splitter, the bunching behavior of bosonic and fermionic anyons matches that of their standard counterparts. We leave it as an open question whether this remains true for multimode interferometers (though there is evidence that the behavior does change in that case \cite{liu_asymmetric_2018}). This is also a manifestation of an important distinction between two-dimensional anyons, which in a sense interpolate between bosons and fermions, and those we consider, which form two separate classes of particles.

Though the anyonic exchange phase did not alter the HOM and Pauli exclusion effects, it does play a crucial role in applications for quantum computing, as we will see now.

\section{Anyonic quantum computing}\label{sec:III}
In \cite{knill_scheme_2001} it was shown that photonic linear optical networks can be universal for quantum computing. This protocol replaces nonlinearities by adaptive measurements, though the entangling gates it produces are not deterministic. In contrast, it was shown in  \cite{divincenzo_fermionic_2005} that, for fermionic linear optics, even adaptive measurements are not sufficient, and therefore some nonlinearity is mandatory for universal quantum computation. 
On the anyonic side, in previous work \cite{tosta_quantum_2019} we showed how to perform universal quantum computing using only optical networks for fermionic anyons, without the need for any extra resource. Here, we describe a protocol for universal quantum computation that improves on the result of \cite{tosta_quantum_2019} by virtue of (i) being simpler for fermionic anyons and (ii) working for bosonic anyons as well.

We begin in subsection \ref{sec:III.1} by defining a graphical representation for anyonic optical networks, and proving that they cannot be uniquely described by their action on single-particle states. In Section \ref{sec:III.2} we review the dual-rail encoding for optical quantum computation and show how to construct  an entangling two-qubit gate for all bosonic and fermionic anyons with nonzero statistical parameter $\varphi$. 

\subsection{Multimode interferometry and optical braiding networks}\label{sec:III.1}

Let us begin by introducing optical network diagrams to represent how optical elements are arranged in a circuit. These diagrams are composed of two building blocks. They are
\begin{subequations}
\begin{equation*}
\begin{tikzpicture}[baseline,anchor=base]
\begin{scope}
\node[anchor=center] (c) at (0,0) {$\tau$};
\draw[thick] (0.375,0.375) -- (0.375,-0.375) -- (-0.375,-0.375) -- (-0.375,0.375) -- cycle;
\node (i1) at (-0.375,0) {};
\node (i2) at (0.375,0) {};
\node (j1) at (-1,0) {};
\node (j2) at (1,0) {};
\node (z1) at (-1.25,-0.1) {$i$};
\node (z2) at (1.25,-0.1) {$i$};
\draw[thick] (j1.center) to (i1.center);
\draw[thick] (j2.center) to (i2.center);
\end{scope}
\end{tikzpicture}
\end{equation*}
for a phase shifter $PS_{i}(\tau)$, and
\begin{equation*}
\begin{tikzpicture}[baseline,anchor=base]
\node[anchor=center] (c) at (0,0) {$\theta$};
\node (s1-1) at (0,0.5) {};
\node (s2-1) at (0.5,0) {};
\node (s3-1) at (0,-0.5) {};
\node (s4-1) at (-0.5,0) {};
\draw[thick] (s1-1.center) to (s2-1.center);
\draw[thick] (s2-1.center) to (s3-1.center);
\draw[thick] (s3-1.center) to (s4-1.center);
\draw[thick] (s4-1.center) to (s1-1.center);
\node (i1) at (-0.25,0.25) {};
\node (i2) at (-0.25,-0.25) {};
\node (i3) at (0.25,0.25) {};
\node (i4) at (0.25,-0.25) {};
\node (j1) at (-0.5,0.5) {};
\node (j2) at (-0.5,-0.5) {};
\node (j3) at (0.5,0.5) {};
\node (j4) at (0.5,-0.5) {};
\draw[thick] (j1.center) to (i1.center);
\draw[thick] (j2.center) to (i2.center);
\draw[thick] (i3.center) to (j3.center);
\draw[thick] (i4.center) to (j4.center);
\node (e1) at (-0.75,0.5) {};
\node (e2) at (-0.75,-0.5) {};
\node (f1) at (0.75,0.5) {};
\node (f2) at (0.75,-0.5) {};
\draw[thick] (e1.center) to (j1.center);
\draw[thick] (e2.center) to (j2.center);
\draw[thick] (j3.center) to (f1.center);
\draw[thick] (j4.center) to (f2.center);

\node (x1) at (-1,0.4) {$i$};
\node (x2) at (1,0.4) {$i$};
\node (y1) at (-1,-0.6) {$j$};
\node (y2) at (1,-0.6) {$j$};
\end{tikzpicture}
\end{equation*}
for a beam splitter $BS_{ij}(\theta)$.
\end{subequations}
If the beam splitter acts between distant modes, the intermediate modes that go through it will be represented by dashed lines, as seen in Eq. (\ref{eq-42}).

In these diagrams, elements are composed from left to right, that is
\begin{equation*}
BS_{12}(\theta_{2})BS_{12}(\theta_{1})=
\begin{tikzpicture}[baseline,anchor=base]
\begin{scope}
\node[anchor=center] (c) at (0,0) {$\theta_{1}$};
\node (s1-1) at (0,0.5) {};
\node (s2-1) at (0.5,0) {};
\node (s3-1) at (0,-0.5) {};
\node (s4-1) at (-0.5,0) {};
\draw[thick] (s1-1.center) to (s2-1.center);
\draw[thick] (s2-1.center) to (s3-1.center);
\draw[thick] (s3-1.center) to (s4-1.center);
\draw[thick] (s4-1.center) to (s1-1.center);
\node (i1) at (-0.25,0.25) {};
\node (i2) at (-0.25,-0.25) {};
\node (i3) at (0.25,0.25) {};
\node (i4) at (0.25,-0.25) {};
\node (j1) at (-0.5,0.5) {};
\node (j2) at (-0.5,-0.5) {};
\node (j3) at (0.5,0.5) {};
\node (j4) at (0.5,-0.5) {};
\draw[thick] (j1.center) to (i1.center);
\draw[thick] (j2.center) to (i2.center);
\draw[thick] (i3.center) to (j3.center);
\draw[thick] (i4.center) to (j4.center);
\node (e1) at (-0.75,0.5) {};
\node (e2) at (-0.75,-0.5) {};
\node (f1) at (0.75,0.5) {};
\node (f2) at (0.75,-0.5) {};
\draw[thick] (e1.center) to (j1.center);
\draw[thick] (e2.center) to (j2.center);
\draw[thick] (j3.center) to (f1.center);
\draw[thick] (j4.center) to (f2.center);
\end{scope}
\begin{scope}[shift={(1.5,0)}]
\node[anchor=center] (c) at (0,0) {$\theta_{2}$};
\node (s1-1) at (0,0.5) {};
\node (s2-1) at (0.5,0) {};
\node (s3-1) at (0,-0.5) {};
\node (s4-1) at (-0.5,0) {};
\draw[thick] (s1-1.center) to (s2-1.center);
\draw[thick] (s2-1.center) to (s3-1.center);
\draw[thick] (s3-1.center) to (s4-1.center);
\draw[thick] (s4-1.center) to (s1-1.center);
\node (i1) at (-0.25,0.25) {};
\node (i2) at (-0.25,-0.25) {};
\node (i3) at (0.25,0.25) {};
\node (i4) at (0.25,-0.25) {};
\node (j1) at (-0.5,0.5) {};
\node (j2) at (-0.5,-0.5) {};
\node (j3) at (0.5,0.5) {};
\node (j4) at (0.5,-0.5) {};
\draw[thick] (j1.center) to (i1.center);
\draw[thick] (j2.center) to (i2.center);
\draw[thick] (i3.center) to (j3.center);
\draw[thick] (i4.center) to (j4.center);
\node (e1) at (-0.75,0.5) {};
\node (e2) at (-0.75,-0.5) {};
\node (f1) at (0.75,0.5) {};
\node (f2) at (0.75,-0.5) {};
\draw[thick] (e1.center) to (j1.center);
\draw[thick] (e2.center) to (j2.center);
\draw[thick] (j3.center) to (f1.center);
\draw[thick] (j4.center) to (f2.center);
\end{scope}

\node (x1) at (2.5,0.4) {1};
\node (y1) at (2.5,-0.6) {2};
\node (x2) at (-1,0.4) {1};
\node (y2) at (-1,-0.6) {2};
\end{tikzpicture}
\end{equation*}

As discussed previously, a linear-optical network for standard bosons or fermions implements a specified linear dynamical map $\hat{U}$ that is completely determined by its matrix elements $U=[U_{i,j}]$, i.e.\ by its action on single-particle subspace. Anyonic optical networks do not have this property. To prove this, consider the network below:
\begin{equation}\label{eq-42}
\hat{B}=
\begin{tikzpicture}[baseline,anchor=base,scale=0.9]
\begin{scope}[shift={(0.25,0)}]
\node (a1) at (-4,1) {};
\node (a2) at (-4,0) {};
\node (a3) at (-4,-1) {};
\node (bs23-2in) at (-3.5,0) {};
\node (bs23-3in) at (-3.5,-1) {};
\draw[thick] (a2.center) to (bs23-2in.center);
\draw[thick] (a3.center) to (bs23-3in.center);
\end{scope}
\begin{scope}[shift={(-2.75,-0.5)}]
\node[anchor=center] (c) at (0,0) {$\pi/2$};
\node (s1-1) at (0,0.5) {};
\node (s2-1) at (0.5,0) {};
\node (s3-1) at (0,-0.5) {};
\node (s4-1) at (-0.5,0) {};
\draw[thick] (s1-1.center) to (s2-1.center);
\draw[thick] (s2-1.center) to (s3-1.center);
\draw[thick] (s3-1.center) to (s4-1.center);
\draw[thick] (s4-1.center) to (s1-1.center);
\node (i1) at (-0.25,0.25) {};
\node (i2) at (-0.25,-0.25) {};
\node (i3) at (0.25,0.25) {};
\node (i4) at (0.25,-0.25) {};
\node (j1) at (-0.5,0.5) {};
\node (j2) at (-0.5,-0.5) {};
\node (j3) at (0.5,0.5) {};
\node (j4) at (0.5,-0.5) {};
\draw[thick] (j1.center) to (i1.center);
\draw[thick] (j2.center) to (i2.center);
\draw[thick] (i3.center) to (j3.center);
\draw[thick] (i4.center) to (j4.center);
\end{scope}
\begin{scope}[xscale=0.5,shift={(-2,0)}]
\node (bs23-2out) at (-2.5,0) {};
\node (bs23-3out) at (-2.5,-1) {};
\node (bs12-1in) at (-1.5,1) {};
\node (bs12-2in) at (-1.5,-0) {};
\draw[thick] (a1.center) to (bs12-1in.center);
\draw[thick] (bs23-2out.center) to (bs12-2in.center);
\end{scope}
\begin{scope}[shift={(-1.25,0.5)}]
\node[anchor=center] (c) at (0,0) {$\pi/2$};
\node (s1-1) at (0,0.5) {};
\node (s2-1) at (0.5,0) {};
\node (s3-1) at (0,-0.5) {};
\node (s4-1) at (-0.5,0) {};
\draw[thick] (s1-1.center) to (s2-1.center);
\draw[thick] (s2-1.center) to (s3-1.center);
\draw[thick] (s3-1.center) to (s4-1.center);
\draw[thick] (s4-1.center) to (s1-1.center);
\node (i1) at (-0.25,0.25) {};
\node (i2) at (-0.25,-0.25) {};
\node (i3) at (0.25,0.25) {};
\node (i4) at (0.25,-0.25) {};
\node (j1) at (-0.5,0.5) {};
\node (j2) at (-0.5,-0.5) {};
\node (j3) at (0.5,0.5) {};
\node (j4) at (0.5,-0.5) {};
\draw[thick] (j1.center) to (i1.center);
\draw[thick] (j2.center) to (i2.center);
\draw[thick] (i3.center) to (j3.center);
\draw[thick] (i4.center) to (j4.center);
\end{scope}
\begin{scope}[xscale=0.5,shift={(-1,0)}]
\node (bs12-1out) at (-0.5,1) {};
\node (bs12-2out) at (-0.5,-0) {};
\node (bs13-1in) at (0.5,1) {};
\node (bs13-2in) at (0.5,0) {};
\node (bs13-3in) at (0.5,-1) {};
\draw[thick] (bs12-1out.center) to (bs13-1in.center);
\draw[thick,dashed] (bs12-2out.center) to (bs13-2in.center);
\draw[thick] (bs23-3out.center) to (bs13-3in.center);
\end{scope}
\begin{scope}[shift={(0.25,0)}]
\node[anchor=center] (c) at (0,0) {$\pi/2$};
\node (s1-1) at (0,0.5) {};
\node (s2-1) at (0.5,0) {};
\node (s3-1) at (0,-0.5) {};
\node (s4-1) at (-0.5,0) {};
\draw[thick] (s1-1.center) to (s2-1.center);
\draw[thick] (s2-1.center) to (s3-1.center);
\draw[thick] (s3-1.center) to (s4-1.center);
\draw[thick] (s4-1.center) to (s1-1.center);
\node (i1) at (-0.25,0.25) {};
\node (i2) at (-0.25,-0.25) {};
\node (i3) at (0.25,0.25) {};
\node (i4) at (0.25,-0.25) {};
\node (j1) at (-0.5,1) {};
\node (j2) at (-0.5,-1) {};
\node (j3) at (0.5,1) {};
\node (j4) at (0.5,-1) {};
\draw[thick] (j1.center) to (i1.center);
\draw[thick] (j2.center) to (i2.center);
\draw[thick] (i3.center) to (j3.center);
\draw[thick] (i4.center) to (j4.center);
\end{scope}
\begin{scope}[xscale=0.5]
\node (bs13-1out) at (1.5,1) {};
\node (bs13-2out) at (1.5,0) {};
\node (bs12-1in2) at (2.5,1) {};
\node (bs12-2in2) at (2.5,-0) {};
\draw[thick] (bs13-1out.center) to (bs12-1in2.center);
\draw[thick,dashed] (bs13-2out.center) to (bs12-2in2.center);
\end{scope}
\begin{scope}[shift={(1.75,0.5)}]
\node[anchor=center] (c) at (0,0) {$\pi/2$};
\node (s1-1) at (0,0.5) {};
\node (s2-1) at (0.5,0) {};
\node (s3-1) at (0,-0.5) {};
\node (s4-1) at (-0.5,0) {};
\draw[thick] (s1-1.center) to (s2-1.center);
\draw[thick] (s2-1.center) to (s3-1.center);
\draw[thick] (s3-1.center) to (s4-1.center);
\draw[thick] (s4-1.center) to (s1-1.center);
\node (i1) at (-0.25,0.25) {};
\node (i2) at (-0.25,-0.25) {};
\node (i3) at (0.25,0.25) {};
\node (i4) at (0.25,-0.25) {};
\node (j1) at (-0.5,0.5) {};
\node (j2) at (-0.5,-0.5) {};
\node (j3) at (0.5,0.5) {};
\node (j4) at (0.5,-0.5) {};
\draw[thick] (j1.center) to (i1.center);
\draw[thick] (j2.center) to (i2.center);
\draw[thick] (i3.center) to (j3.center);
\draw[thick] (i4.center) to (j4.center);
\end{scope}
\begin{scope}[shift={(-1.5,0)}]
\node (p1) at (4.5,1) {};
\node (p2) at (4.5,0) {};
\node (p3) at (4.5,-1) {};
\end{scope}
\begin{scope}[shift={(0,0)},xscale=0.5]
\node (bs12-1out2) at (4.5,1) {};
\node (bs12-2out2) at (4.5,-0) {};
\node (bs13-3out) at (1.5,-1) {};
\draw[thick] (bs12-1out2.center) to (p1);
\draw[thick] (bs12-2out2.center) to (p2);
\draw[thick] (bs13-3out.center) to (p3);
\end{scope}
\begin{scope}[shift={(3.125,1)}]
\node[anchor=center] (c) at (0,0) {$-1$};
\draw[thick] (0.25,0.25) -- (0.25,-0.25) -- (-0.25,-0.25) -- (-0.25,0.25) -- cycle;
\node (i1) at (-0.25,0) {};
\node (i2) at (0.25,0) {};
\node (j1) at (-1,0) {};
\node (j2) at (0.75,0) {};
\draw[thick] (i2.center) to (j2.center);
\end{scope}
\begin{scope}[shift={(3.125,0)}]
\node[anchor=center] (c) at (0,0) {$i$};
\draw[thick] (0.25,0.25) -- (0.25,-0.25) -- (-0.25,-0.25) -- (-0.25,0.25) -- cycle;
\node (i1) at (-0.25,0) {};
\node (i2) at (0.25,0) {};
\node (j1) at (-1,0) {};
\node (j2) at (0.75,0) {};
\draw[thick] (i2.center) to (j2.center);
\end{scope}
\begin{scope}[shift={(3.125,-1)}]
\node[anchor=center] (c) at (0,0) {$i$};
\draw[thick] (0.25,0.25) -- (0.25,-0.25) -- (-0.25,-0.25) -- (-0.25,0.25) -- cycle;
\node (i1) at (-0.25,0) {};
\node (i2) at (0.25,0) {};
\node (j1) at (-1,0) {};
\node (j2) at (0.75,0) {};
\draw[thick] (i2.center) to (j2.center);
\end{scope}
\end{tikzpicture}    
\end{equation}
For the single particle subspace spanned by $\{\ket{1,0,0}_{\chi}, \ket{0,1,0}_{\chi}, \ket{0,0,1}_{\chi}\}$, the network acts as the identity for standard and anyonic particles of both types. For standard bosons and fermions, this implies that this network acts as the identity transformation on the whole Fock space.

In contrast, we can show that the action of this network on states with two or three anyons (either bosonic or fermionic) is nontrivial:
\begin{subequations}\label{eq-43}
\begin{align}
\hat{B}\ket{0,1,1}_{\chi}&=\ket{0,1,1}_{\chi},\\
\hat{B}\ket{1,0,1}_{\chi}&=e^{-i\varphi}\ket{1,0,1}_{\chi},\\
\hat{B}\ket{1,1,0}_{\chi}&=e^{i\varphi}\ket{1,1,0}_{\chi},\\
\hat{B}\ket{1,1,1}_{\chi}&=\ket{1,1,1}_{\chi}.
\end{align}
\end{subequations}
This network acts diagonally on this subspace, as expected, since each element only permutes the anyons between the modes but does not create superpositions. However, some diagonal elements are nontrivial and proportional to anyonic exchange phases. We call any network with this property an optical braiding network, as they exploit the Aharonov-Bohm effect to, in some sense, braid the particles.

The existence of a nontrivial optical braiding network shows that the unique specification of interferometers based on their action on single-particle states is no longer  possible.

\subsection{Dual-rail universal quantum computer}\label{sec:III.2}
Arguably the most common way to encode a qubit for use in linear-optical computing is the dual-rail encoding. In this encoding, $n$ qubits are mapped to the states of $n$ particles in $2n$ modes, such that each logical qubit is supported in a pair of neighboring modes. The logical qubit states are defined by
\begin{subequations}
\begin{align*}
\ket{0_{L}}&=\ket{1,0},\\
\ket{1_{L}}&=\ket{0,1}.
\end{align*}
\end{subequations}
Note that these states have at most a single particle per mode, and so they are supported by all types of particle we consider here. Therefore, we temporarily drop the reference to the particle type in the notation, unless where necessary.

To encode more qubits is straightforward. A two-qubit system needs four modes, with corresponding logical states
\begin{subequations}
\begin{align*}
\ket{00}_{L}&=\ket{1,0,1,0},\\
\ket{01}_{L}&=\ket{1,0,0,1},\\
\ket{10}_{L}&=\ket{0,1,1,0},\\
\ket{11}_{L}&=\ket{0,1,0,1}.
\end{align*}
\end{subequations}

With this encoding it is possible to perform any logical single-qubit gate using only phase shifters and beam splitters. To prove this, consider a  qubit encoded in modes 1 and 2. A phase shifter on mode 2 acts in the logical basis states as
\begin{subequations}
\begin{align*}
&{PS}_{2}(\theta)\ket{1,0}=\ket{1,0},\\
&{PS}_{2}(\theta)\ket{0,1}=e^{i\theta}\ket{0,1},
\end{align*}
\end{subequations}
which is a logical $Z$ rotation on the Bloch sphere by $\theta$. A beam splitter between modes 1 and 2 acts in the logical basis states as
\begin{subequations}
\begin{align*}
&{BS}_{12}(\theta)\ket{1,0}=\cos{\theta}\ket{1,0}+i\sin{\theta}\ket{0,1},\\
&{BS}_{12}(\theta)\ket{0,1}=i\sin{\theta}\ket{1,0}+\cos{\theta}\ket{0,1},
\end{align*}
\end{subequations}
which is a logical $X$ rotation in the Bloch sphere by an angle $\theta$. With arbitrary rotations around two distinct axes in the Bloch sphere we can perform arbitrary single-qubit gates \cite{nielsen_quantum_2010} via the decomposition
\begin{equation*}
U=e^{i\alpha}e^{-i\frac{\beta Z}{2}}e^{-i\frac{\gamma X}{2}}e^{-i\frac{\delta Z}{2}}.
\end{equation*}
An optical realization of this decomposition is given by the network
\begin{equation*}
\begin{tikzpicture}[baseline,anchor=base]
\begin{scope}[shift={(-1,-0.5)}]
\node[anchor=center] (c) at (0,0) {$\delta$};
\draw[thick] (0.25,0.25) -- (0.25,-0.25) -- (-0.25,-0.25) -- (-0.25,0.25) -- cycle;
\node (i1) at (-0.25,0) {};
\node (i2) at (0.25,0) {};
\node (j1) at (-0.5,0) {};
\node (j2) at (0.5,0) {};
\draw[thick] (j1.center) to (i1.center);
\end{scope}

\node[anchor=center] (c) at (0,0) {$\gamma$};
\node (s1-1) at (0,0.5) {};
\node (s2-1) at (0.5,0) {};
\node (s3-1) at (0,-0.5) {};
\node (s4-1) at (-0.5,0) {};
\draw[thick] (s1-1.center) to (s2-1.center);
\draw[thick] (s2-1.center) to (s3-1.center);
\draw[thick] (s3-1.center) to (s4-1.center);
\draw[thick] (s4-1.center) to (s1-1.center);
\node (i1) at (-0.25,0.25) {};
\node (i2) at (-0.25,-0.25) {};
\node (i3) at (0.25,0.25) {};
\node (i4) at (0.25,-0.25) {};
\node (j1) at (-0.5,0.5) {};
\node (j2) at (-0.5,-0.5) {};
\node (j3) at (0.5,0.5) {};
\node (j4) at (0.5,-0.5) {};
\draw[thick] (j1.center) to (i1.center);
\draw[thick] (j2.center) to (i2.center);
\draw[thick] (i3.center) to (j3.center);
\draw[thick] (i4.center) to (j4.center);
\node (e1) at (-1.5,0.5) {};
\node (e2) at (-0.75,-0.5) {};
\node (f1) at (1.5,0.5) {};
\node (f2) at (0.75,-0.5) {};
\draw[thick] (e1.center) to (j1.center);
\draw[thick] (e2.center) to (j2.center);
\draw[thick] (j3.center) to (f1.center);
\draw[thick] (j4.center) to (f2.center);

\begin{scope}[shift={(1,-0.5)}]
\node[anchor=center] (c) at (0,0) {$\beta$};
\draw[thick] (0.25,0.25) -- (0.25,-0.25) -- (-0.25,-0.25) -- (-0.25,0.25) -- cycle;
\node (i1) at (-0.25,0) {};
\node (i2) at (0.25,0) {};
\node (j1) at (-0.5,0) {};
\node (j2) at (0.5,0) {};
\draw[thick] (j2.center) to (i2.center);
\end{scope}
\end{tikzpicture}
\end{equation*}

To build a universal computer, we must also have an entangling two-qubit gate \cite{lloyd_almost_1995}. For standard particles, deterministic entangling gates require nonlinear interactions. For the anyonic particles we consider the nonlinearity is intrinsic, and so the task of finding an appropriate  gate is simpler.
Consider the optical network below
\begin{equation*}
\begin{tikzpicture}[baseline,anchor=base,scale=0.9]
\node (a4) at (-3.75,2) {};
\node (a5) at (-3.75,-2) {};

\node (c1) at (-4,1.25) {\Huge{\{}};
\node (c2) at (-4,-1.75) {\Huge{\{}};

\node (c1) at (-4.5,1.375) {$Q_{1}$};
\node (c2) at (-4.5,-1.625) {$Q_{2}$};

\node (x) at (-4.125,-0.125) {$A$};

\begin{scope}[shift={(0.25,0)}]
\node (a1) at (-4,1) {};
\node (a2) at (-4,0) {};
\node (a3) at (-4,-1) {};
\node (bs23-2in) at (-3.5,0) {};
\node (bs23-3in) at (-3.5,-1) {};
\draw[thick] (a2.center) to (bs23-2in.center);
\draw[thick] (a3.center) to (bs23-3in.center);
\end{scope}
\begin{scope}[shift={(-2.75,-0.5)}]
\node[anchor=center] (c) at (0,0) {$\pi/2$};
\node (s1-1) at (0,0.5) {};
\node (s2-1) at (0.5,0) {};
\node (s3-1) at (0,-0.5) {};
\node (s4-1) at (-0.5,0) {};
\draw[thick] (s1-1.center) to (s2-1.center);
\draw[thick] (s2-1.center) to (s3-1.center);
\draw[thick] (s3-1.center) to (s4-1.center);
\draw[thick] (s4-1.center) to (s1-1.center);
\node (i1) at (-0.25,0.25) {};
\node (i2) at (-0.25,-0.25) {};
\node (i3) at (0.25,0.25) {};
\node (i4) at (0.25,-0.25) {};
\node (j1) at (-0.5,0.5) {};
\node (j2) at (-0.5,-0.5) {};
\node (j3) at (0.5,0.5) {};
\node (j4) at (0.5,-0.5) {};
\draw[thick] (j1.center) to (i1.center);
\draw[thick] (j2.center) to (i2.center);
\draw[thick] (i3.center) to (j3.center);
\draw[thick] (i4.center) to (j4.center);
\end{scope}
\begin{scope}[xscale=0.5,shift={(-2,0)}]
\node (bs23-2out) at (-2.5,0) {};
\node (bs23-3out) at (-2.5,-1) {};
\node (bs12-1in) at (-1.5,1) {};
\node (bs12-2in) at (-1.5,-0) {};
\draw[thick] (a1.center) to (bs12-1in.center);
\draw[thick] (bs23-2out.center) to (bs12-2in.center);
\end{scope}
\begin{scope}[shift={(-1.25,0.5)}]
\node[anchor=center] (c) at (0,0) {$\pi/2$};
\node (s1-1) at (0,0.5) {};
\node (s2-1) at (0.5,0) {};
\node (s3-1) at (0,-0.5) {};
\node (s4-1) at (-0.5,0) {};
\draw[thick] (s1-1.center) to (s2-1.center);
\draw[thick] (s2-1.center) to (s3-1.center);
\draw[thick] (s3-1.center) to (s4-1.center);
\draw[thick] (s4-1.center) to (s1-1.center);
\node (i1) at (-0.25,0.25) {};
\node (i2) at (-0.25,-0.25) {};
\node (i3) at (0.25,0.25) {};
\node (i4) at (0.25,-0.25) {};
\node (j1) at (-0.5,0.5) {};
\node (j2) at (-0.5,-0.5) {};
\node (j3) at (0.5,0.5) {};
\node (j4) at (0.5,-0.5) {};
\draw[thick] (j1.center) to (i1.center);
\draw[thick] (j2.center) to (i2.center);
\draw[thick] (i3.center) to (j3.center);
\draw[thick] (i4.center) to (j4.center);
\end{scope}
\begin{scope}[xscale=0.5,shift={(-1,0)}]
\node (bs12-1out) at (-0.5,1) {};
\node (bs12-2out) at (-0.5,-0) {};
\node (bs13-1in) at (0.5,1) {};
\node (bs13-2in) at (0.5,0) {};
\node (bs13-3in) at (0.5,-1) {};
\draw[thick] (bs12-1out.center) to (bs13-1in.center);
\draw[thick,dashed] (bs12-2out.center) to (bs13-2in.center);
\draw[thick] (bs23-3out.center) to (bs13-3in.center);
\end{scope}
\begin{scope}[shift={(0.25,0)}]
\node[anchor=center] (c) at (0,0) {$\pi/2$};
\node (s1-1) at (0,0.5) {};
\node (s2-1) at (0.5,0) {};
\node (s3-1) at (0,-0.5) {};
\node (s4-1) at (-0.5,0) {};
\draw[thick] (s1-1.center) to (s2-1.center);
\draw[thick] (s2-1.center) to (s3-1.center);
\draw[thick] (s3-1.center) to (s4-1.center);
\draw[thick] (s4-1.center) to (s1-1.center);
\node (i1) at (-0.25,0.25) {};
\node (i2) at (-0.25,-0.25) {};
\node (i3) at (0.25,0.25) {};
\node (i4) at (0.25,-0.25) {};
\node (j1) at (-0.5,1) {};
\node (j2) at (-0.5,-1) {};
\node (j3) at (0.5,1) {};
\node (j4) at (0.5,-1) {};
\draw[thick] (j1.center) to (i1.center);
\draw[thick] (j2.center) to (i2.center);
\draw[thick] (i3.center) to (j3.center);
\draw[thick] (i4.center) to (j4.center);
\end{scope}
\begin{scope}[xscale=0.5]
\node (bs13-1out) at (1.5,1) {};
\node (bs13-2out) at (1.5,0) {};
\node (bs12-1in2) at (2.5,1) {};
\node (bs12-2in2) at (2.5,-0) {};
\draw[thick] (bs13-1out.center) to (bs12-1in2.center);
\draw[thick,dashed] (bs13-2out.center) to (bs12-2in2.center);
\end{scope}
\begin{scope}[shift={(1.75,0.5)}]
\node[anchor=center] (c) at (0,0) {$\pi/2$};
\node (s1-1) at (0,0.5) {};
\node (s2-1) at (0.5,0) {};
\node (s3-1) at (0,-0.5) {};
\node (s4-1) at (-0.5,0) {};
\draw[thick] (s1-1.center) to (s2-1.center);
\draw[thick] (s2-1.center) to (s3-1.center);
\draw[thick] (s3-1.center) to (s4-1.center);
\draw[thick] (s4-1.center) to (s1-1.center);
\node (i1) at (-0.25,0.25) {};
\node (i2) at (-0.25,-0.25) {};
\node (i3) at (0.25,0.25) {};
\node (i4) at (0.25,-0.25) {};
\node (j1) at (-0.5,0.5) {};
\node (j2) at (-0.5,-0.5) {};
\node (j3) at (0.5,0.5) {};
\node (j4) at (0.5,-0.5) {};
\draw[thick] (j1.center) to (i1.center);
\draw[thick] (j2.center) to (i2.center);
\draw[thick] (i3.center) to (j3.center);
\draw[thick] (i4.center) to (j4.center);
\end{scope}
\begin{scope}[shift={(-1.5,0)}]
\node (p1) at (4.5,1) {};
\node (p2) at (4.5,0) {};
\node (p3) at (4.5,-1) {};
\end{scope}
\begin{scope}[shift={(0,0)},xscale=0.5]
\node (bs12-1out2) at (4.5,1) {};
\node (bs12-2out2) at (4.5,-0) {};
\node (bs13-3out) at (1.5,-1) {};
\draw[thick] (bs12-1out2.center) to (p1);
\draw[thick] (bs12-2out2.center) to (p2);
\draw[thick] (bs13-3out.center) to (p3);
\end{scope}
\begin{scope}[shift={(3.125,1)}]
\node[anchor=center] (c) at (0,0) {$-1$};
\draw[thick] (0.25,0.25) -- (0.25,-0.25) -- (-0.25,-0.25) -- (-0.25,0.25) -- cycle;
\node (i1) at (-0.25,0) {};
\node (i2) at (0.25,0) {};
\node (j1) at (-1,0) {};
\node (j2) at (0.75,0) {};
\draw[thick] (i2.center) to (j2.center);
\end{scope}
\begin{scope}[shift={(3.125,0)}]
\node[anchor=center] (c) at (0,0) {$i$};
\draw[thick] (0.25,0.25) -- (0.25,-0.25) -- (-0.25,-0.25) -- (-0.25,0.25) -- cycle;
\node (i1) at (-0.25,0) {};
\node (i2) at (0.25,0) {};
\node (j1) at (-1,0) {};
\node (j2) at (0.75,0) {};
\draw[thick] (i2.center) to (j2.center);
\end{scope}
\begin{scope}[shift={(3.125,-1)}]
\node[anchor=center] (c) at (0,0) {$i$};
\draw[thick] (0.25,0.25) -- (0.25,-0.25) -- (-0.25,-0.25) -- (-0.25,0.25) -- cycle;
\node (i1) at (-0.25,0) {};
\node (i2) at (0.25,0) {};
\node (j1) at (-1,0) {};
\node (j2) at (0.75,0) {};
\draw[thick] (i2.center) to (j2.center);
\end{scope}
\node (b4) at (3.875,2) {};
\node (b5) at (3.875,-2) {};

\draw[thick] (a4.center) to (b4.center);
\draw[thick] (a5.center) to (b5.center);
\end{tikzpicture}
\end{equation*}
where modes associated to $Q_{1}$ encode the first qubit and modes in $Q_{2}$ encode the second qubit. Mode $A$ is an auxiliary mode. This network is equivalent to applying the network in Eq.\ (\ref{eq-42}) to three modes in the middle, leaving the outermost two untouched. Therefore, if we initialize the auxiliary mode $A$ with one particle, Eqs.\ (\ref{eq-43}) show that this network generates the two-qubit gate
\begin{equation*}
CP(\varphi)=
\begin{bmatrix}
1 & 0 & 0 & 0\\
0 & 1 & 0 & 0\\
0 & 0 & 1 & 0\\
0 & 0 & 0 & e^{i\varphi}\\
\end{bmatrix}
\end{equation*}
This gate is a controlled phase gate, which is an entangling gate for all $\varphi\neq0$. 

In previous work \cite{tosta_quantum_2019} we showed that fermionic anyons for any $\varphi \neq 0$ are universal for quantum computation. The above construction proves quantum universality for both fermionic and bosonic anyons for any $\varphi\neq0$, using only optical networks and one auxiliary mode, with one particle that never leaves the circuit. Besides also holding for bosonic anyons, it improves over the construction of \cite{tosta_quantum_2019} for fermionic anyons by being simpler. 

This ends our applications of anyonic optical networks for quantum computing with discrete variables. In the next Section we  move to the realm of continuous variables.

\section{Coherent states of anyons}\label{sec:IV}
Bosonic coherent states are the platform of many proposals for quantum computing with continuous variables \cite{jeong_efficient_2002,ralph_quantum_2003}. In this Section, we define the anyonic analogue of bosonic coherent states.

In subsection \ref{sec:IV-1}, we review the general quantum theory of coherence as formulated in \cite{glauber_coherent_1963,glauber_quantum_1963,titulaer_density_1966,bialynickabirula_properties_1968,stoler_generalized_1971}, focusing on single-mode coherent states, and extend it to bosonic anyons. In subsection \ref{sec:IV-2} we show how optical elements act on anyonic coherent states.

\subsection{Quantum theory of coherence and anyons}\label{sec:IV-1}
Let us begin by considering single-mode coherent states (and dropping the mode index from all expressions), and defer the multimode case to the next subsection.

Coherent states are usually defined as eigenstates of annihilation operators. For example, for standard bosons, a coherent state satisfies
\begin{equation} \label{eq:coherent}
\hat{b}\ket{g}_{b}=g\ket{g}_{b}.
\end{equation}
Here, $g$, known as the amplitude of the coherent state, can be any complex number (due to the non-Hermiticity of $\hat{b}$). In the Fock basis, the coherent state can be written as
\begin{equation}
\ket{g}_{b}=e^{-\frac{1}{2}\abs{g}^{2}}\sum_{n}\frac{(g\hat{b}^{\dagger})^{n}}{n!}\ket{0}_{b},
\end{equation}

The states defined by Eq.\  (\ref{eq:coherent}), however, are only a particular kind of coherent state. The theory of optical quantum coherence arose from the task of discriminating experimentally between different states of the electromagnetic field  by the amplitude of $n$-photon absorption events \cite{glauber_coherent_1963,glauber_quantum_1963}. Given a single-mode input state $\ket{input}_{b}$ coming from some field source, the probability of detecting $n$ photons is given by the $n$-th order correlation function
\begin{equation*}
C_{b}(n)=\bra{input}{(\hat{b}^{\dagger})^{n}(\hat{b})^{n}}\ket{input}_{b}.
\end{equation*}
These correlators can be used, for example, as a measure to attest the quality of single-photon sources \cite{spring_boson_2013,crespi_integrated_2013,spagnolo_experimental_2014}, since for such sources we should have $C_{b}(1)$ as high as possible. It is more common, however, to use the so-called the $n$-th order single-mode coherence functions
\begin{equation}
c_{b}(n)=\frac{\expval{(\hat{b}^{\dagger})^{n}(\hat{b})^{n}}_{b}}{\expval{n}^{n}_{b}},
\end{equation}
where $c_{b}(n)$ is calculated relative to some specific state. We say that a state $\ket{\psi}_{b}$ is $n$-order coherent if $c_{b}(m)=1$ for all $m\leq n$. In the general theory of coherence, a \emph{coherent state}  is one for which $c_{b}(n)=1$ for all $n\in\mathbb{N}$. In other words, this state has full coherence in the sense that it is $n$-order coherent for all $n$.

The most general coherent state is of the form
\begin{equation}
\ket{g|\{\rho_{n}\}}_{b}=e^{-\frac{1}{2}\abs{g}^{2}}\sum_{n}e^{i\rho_{n}}\frac{(g\hat{\beta}^{\dagger})^{n}}{n!}\ket{0}_{b},
\end{equation}
where $\rho_{n}$ is an arbitrary sequence of real numbers \cite{titulaer_density_1966,bialynickabirula_properties_1968}. Note that we recover the states defined in Eq.\ (\ref{eq:coherent}) in the particular case of $\rho_n=0$ for all  $n$.
 
Let us now consider the analogue of coherent states for bosonic anyons. We may define eigenstates of annihilation operators:
\begin{equation}
\hat{\beta}\ket{g}_{\beta}=g\ket{g_{\beta}},
\end{equation}
where again $g$ is any complex number. These states  have full coherence if we define the $n$th order coherence $c_{\beta}(n)$ function in the analogous form
\begin{equation} \label{eq:coherences}
c_{\beta}(n)=\frac{\expval{(\hat{\beta}^{\dagger})^{n}(\hat{\beta})^{n}}_{\beta}}{\expval{n}^{n}_{\beta}}  
\end{equation}
We thus also refer to these as coherent states. 

Note that single-mode bosonic anyons satisfy the same commutation relations as standard bosons. Therefore, as long as we only consider a single mode, coherent states for anyonic bosons have exactly the same properties as those of standard bosons. Let us consider some of these properties before moving on to the distinctions between the two types of states, which arise in the multi-mode setting.
For the rest of this subsection, all symbolic expressions have the same form for both types of particles and so we use the letter $\beth$, to stand in for both $b$ and $\beta$, whenever possible. 

Eigenstates of the annihilation operator $\hat{\beth}$ can be created from the vacuum by the action of the displacement operator
\begin{equation}
\hat{D}(g)=\exp{g\hat{\beth}^{\dagger}-g^{*}\hat{\beth}},
\end{equation}
which can be written in the equivalent forms
\begin{subequations}\label{eq-60}
\begin{align}
\hat{D}(g)&=e^{-\frac{1}{2}\abs{g}^{2}}\exp{g\hat{\beth}^{\dagger}}\exp{-g^{*}\hat{\beth}},\\
\hat{D}(g)&=e^{\frac{1}{2}\abs{g}^{2}}\exp{-g^{*}\hat{\beth}}\exp{g\hat{\beth}^{\dagger}}.
\end{align}
\end{subequations}
Note that $\hat{D}(g)$ is a unitary operator, with $\hat{D}^{\dagger}(g)=\hat{D}(-g)$. Several properties of displacement operators can be derived from those identities. The most important  is that these operators ``displace'' the vacuum state. This follows from the equations
\begin{subequations}\label{eq-61}
\begin{align}
\hat{D}(-g)\hat{\beth}\hat{D}(g)&=\hat{\beth}+g\\
\hat{D}(-g)\hat{\beth}^{\dagger}\hat{D}(g)&=\hat{\beth}^{\dagger}+g^{*},
\end{align}
\end{subequations}
called the displacement identities, from which we can show that the state
\begin{equation}
\ket{g}_{\beth}=\hat{D}(g)\ket{0}_{\beth}=e^{-\frac{1}{2}\abs{g}^{2}}\sum_{n}\frac{(g\hat{\beth}^{\dagger})^{n}}{n!}\ket{0}_{\beth}
\end{equation}
is, in fact, an eigenstate of $\hat{\beth}$.

It can also be shown that single-mode coherent states of both standard and anyonic bosons satisfy a minimum uncertainty relation with respect to the quadrature operators
\begin{align}
\hat{q}&=\frac{1}{2}(\hat{\beth}^{\dagger}+\hat{\beth})\\
\hat{p}&=\frac{1}{2i}(\hat{\beth}^{\dagger}-\hat{\beth}).
\end{align}
The quadrature operators satisfy the commutation relation $[\hat{q},\hat{p}]=1$, and form a representation of the quantum harmonic oscillator. Only eigenstates of $\hat{\beth}$ can be simultaneously single-mode coherent states and minimum uncertainty states \cite{bialynickabirula_properties_1968}, which is a property that sets them apart from more general coherent states.

We finish our review by showing that displacement operators form an algebra, given by the relation
\begin{equation}
\hat{D}(g)\hat{D}(h)=e^{gh^{*}-hg^{*}}D(g+h).
\end{equation}
From this algebra, one can calculate the overlap function
\begin{equation}
\braket{h}{g}_{\beth}=e^{-\frac{1}{2}(\abs{g}^{2}+\abs{h}^{2}-2{g}h^{*})},
\end{equation}
which shows that coherent states are not orthogonal in general. Nonetheless they still satisfy the relation
\begin{equation}
\int_{\mathbf{C}}\frac{d^{2}g}{\pi}\ketbra{g}_{\beth}=\mathbb{I},
\end{equation}
with $\mathbb{I}$ as the identity operator, which makes them an over-complete basis for the single-mode state space.

\subsection{Coherent states in optical networks}\label{sec:IV-2}
We now shift to discussing the differences between bosonic anyon coherent states and their standard boson counterpart. These differences only arise in the multimode case, and we consider, for simplicity, only two modes, which we label 1 and 2.

Consider first the case of standard bosons. We denote two-mode coherent states as follows
\begin{equation}
\ket{u;v}_{b}=\hat{D}_{1}(u)\hat{D}_{2}(v)\ket{0}_{b},
\end{equation}
for any $u,v\in\mathbb{C}$ and recall, $\hat{D}$ are displacement operators.

Suppose now that the system is initialized in either of the coherent states
\begin{subequations}
\begin{align*}
\ket{g;0}_{b}&=\hat{D}_{1}(g)\ket{0}_{b},\\
\ket{0;h}_{b}&=\hat{D}_{2}(h)\ket{0}_{b},
\end{align*}
\end{subequations}
The action of a phase shifter $PS_{1}(\tau)$ on $\ket{g;0}_{b}$ is simply given by 
\begin{equation*}
PS_{1}(\tau)\ket{g;0}_{b}=\ket{ge^{i\tau};0}_{b},
\end{equation*}
and the action of $PS_{2}(\tau)$ on $\ket{0;h}_{b}$ is, similarly, given by $\ket{0;he^{i\tau}}$. 

More interesting is the action of the beam splitter $BS_{12}(\theta)$, which can be obtained using Eq.\ \  (\ref{eq-61}), resulting in
\begin{subequations}
\begin{align*}
BS_{12}(\theta)\ket{g;0}_{b}&=\ket{\cos{(\theta)}g;i\sin{(\theta)}g}_{b},\\
BS_{12}(\theta)\ket{0;h}_{b}&=\ket{i\sin{(\theta)}h;\cos{(\theta)}h}_{b}.
\end{align*}
\end{subequations}
It is easy to see that this state is two-mode coherent, in the sense that $c^{1}_{b}(n)=1$ and $c^{2}_{b}(n)=1$ for all $n$, where $c^{i}$ is the natural generalization of Eq.\ (\ref{eq:coherences}) to mode $i$. We call any state of the form
\begin{equation}
\ket{\{g_{i}\}}_{b}=\prod_{i=1}^{m}\hat{D}_{i}(g_{i})\ket{0}_{b},
\end{equation}
an exact multi-mode coherent state, or \emph{exact coherent state} for short. Notice that this state is also a simultaneous eigenstate of all $\{b_{i}\}$.

In general, the action of an arbitrary two-mode linear map $\hat{A}$ over an arbitrary two-mode coherent state is
\begin{equation*}
\hat{A}\ket{u;v}_{b}=\ket{A_{1,1}u+A_{1,2}v;A_{2,1}u+A_{2,2}v}_{b},
\end{equation*}
where $\hat{A}$ is determined by the coefficient matrix $A=[A_{i,j}]$.

It is a simple fact of linear algebra that, for any nonzero complex vector, there is a unitary matrix which rotates it into a vector with a single nonzero component. Therefore, given any $\ket{u;v}_{b}$, one can find two linear maps $\hat{A}^{1}(u,v)$ and $\hat{A}^{2}(u,v)$, such that
\begin{subequations}
\begin{align*}
\hat{A}^{1}(u,v)\ket{u;v}_{b}&=\ket{\sqrt{\abs{u}^{2}+\abs{v}^{2}};0}_{b}\\
\hat{A}^{2}(u,v)\ket{u;v}_{b}&=\ket{0;\sqrt{\abs{u}^{2}+\abs{v}^{2}}}_{b}.
\end{align*}
\end{subequations}
This observation motivates the following definition. When, for a bosonic state $\ket{\psi}_{b}$, a linear dynamic $\hat{A}(\psi)$ can be found such that $\hat{A}(\psi)\ket{\psi}_{b}$ is a single-mode coherent state, we say that $\ket{\psi}_{b}$ is a \emph{dynamically coherent} state. It is not hard to see that all dynamically coherent states are exactly coherent.

These two notions of coherent state---dynamic and exact---are not standard in the quantum optics literature, since there they coincide. However, for bosonic anyons, annihilation operators do not commute. This suggests that these definitions might not agree, as we now show.

Exact single-mode coherent states for bosonic anyons have the same form as standard bosonic ones
\begin{subequations}
\begin{align}
\ket{g;0}_{\beta}&=\hat{D}_{1}(g)\ket{0}_{\beta},\\
\ket{0;h}_{\beta}&=\hat{D}_{2}(h)\ket{0}_{\beta},
\end{align}
\end{subequations}

The action of a phase shifter on these states is similar to the standard case. The action of a beam splitter operator can be calculated from the propagation identities of Eqs.\ (\ref{eq-32a})-(\ref{eq-32b}), leading  to
\begin{subequations}
\begin{align*}
N_{g}\sum_{n}\frac{1}{n!}\prod_{k=0}^{n-1}(g\cos(\theta)\hat{\beta}^{\dagger}_{1}+ie^{-ik\varphi}g\sin(\theta)\hat{\beta}^{\dagger}_{2})\ket{0}_{\beta},\\
N_{h}\sum_{n}\frac{1}{n!}\prod_{k=0}^{n-1}(ie^{ik\varphi}h\sin(\theta)\hat{\beta}^{\dagger}_{1}+h\cos(\theta)\hat{\beta}^{\dagger}_{2})\ket{0}_{\beta},
\end{align*}
\end{subequations}
where $N_{x}=e^{-\frac{1}{2}\abs{x}^{2}}$for any $x$. We can get rid of the binomial product by using the deformed binomial identity in Eqs.\ (\ref{eq-A18}). This gives us
\begin{subequations}\label{eq43}
\begin{equation}
\left[\sum_{l,k}N_{g}c^{1}_{l,k}\frac{(\cos(\theta)g\hat{\beta}^{\dagger}_{1})^{l}}{l!}\frac{(i\sin(\theta)g\hat{\beta}^{\dagger}_{2})^{k}}{k!}\right]\ket{0}_{\beta},
\end{equation}
\begin{equation}
\left[\sum_{l,k}N_{h}c^{2}_{l,k}\frac{(i\sin{(\theta)}h\hat{\beta}^{\dagger}_{1})^{l}}{l!}\frac{(\cos{(\theta)}h\hat{\beta}^{\dagger}_{2})^{k}}{k!}\right]\ket{0}_{\beta},
\end{equation}
\end{subequations}
where we have
\begin{subequations}
\begin{align*}
c^{1}_{l,k}&=e^{-i\varphi(lk+\frac{k(k-1)}{2})},\\
c^{2}_{l,k}&=e^{i\varphi\frac{l(l-1)}{2}}.
\end{align*}
\end{subequations}
Due to the group property of beam splitters, the form of the states in Eqs.\ (\ref{eq43}) stay the same, even after successive applications of beam splitters with different angles. This observation allows us to find at least two generalizations of the notion of dynamical coherence, represented by the equations
\begin{subequations}
\begin{align*}
\ket{u;v}^{1}_{\beta}&=N\sum_{n}\frac{1}{n!}\prod_{k=0}^{n-1}(u\hat{\beta}^{\dagger}_{1}+e^{-ik\varphi}b\hat{\beta}^{\dagger}_{2})\ket{0}_{\beta},\\
\ket{u;v}^{2}_{\beta}&=N\sum_{n}\frac{1}{n!}\prod_{k=0}^{n-1}(e^{ik\varphi}u\hat{\beta}^{\dagger}_{1}+v\hat{\beta}^{\dagger}_{2})\ket{0}_{\beta},    
\end{align*}
\end{subequations}
where $N=\exp{-(\abs{u}^{2}+\abs{v}^{2})}$. We refer to these states as type 1 and type 2 dynamically coherent states, respectively.
The action of a general two-mode interferometer $\hat{A}$ is then
\begin{equation*}
\hat{A}\ket{u;v}^{i}_{\beta}=\ket{(A_{11}u+A_{12}v);(A_{21}u+A_{22}v)}^{i}_{\beta},
\end{equation*}
for $i=1,2$. 

By the definition, its easy to see that the unitary $\hat{T}=\exp{i\varphi\hat{K}}$ with $K$ given by the anyonic Kerr Hamiltonian
\begin{equation*}
\hat{K}_{12}=\frac{(\hat{n}_{1}+\hat{n}_{2})(\hat{n}_{1}+\hat{n}_{2}-1)}{2},
\end{equation*}
is such that
\begin{equation*}
\ket{u;v}^{2}_{\beta}=\hat{T}\ket{u;v}^{1}_{\beta}.
\end{equation*}
Therefore, there is no passive quadratic Hamiltonian that can convert a type $1$ dynamically coherent state into a type $2$ dynamically coherent state. 

In the case of exactly coherent two-mode states
\begin{subequations}
\begin{align*}
\ket{u;v}^{<}_{\beta}&=\hat{D}_{1}(u)\hat{D}_{2}(v)\ket{0}_{\beta},\\
\ket{u;v}^{>}_{\beta}&=\hat{D}_{2}(v)\hat{D}_{1}(u)\ket{0}_{\beta},
\end{align*}
\end{subequations}
notice that they can be written as
\begin{subequations}
\begin{align*}
\left[\sum_{l,k}N\frac{(u\hat{\beta}^{\dagger}_{1})^{l}}{l!}\frac{(v\hat{\beta}^{\dagger}_{2})^{k}}{k!}\right]&\ket{0}_{\beta},\\
\left[\sum_{l,k}Ne^{-i\varphi kl}\frac{(u\hat{\beta}^{\dagger}_{1})^{l}}{l!}\frac{(v\hat{\beta}^{\dagger}_{2})^{k}}{k!}\right]&\ket{0}_{\beta}.
\end{align*}
\end{subequations}
By inspection, we see that no exactly coherent state can be mapped into a dynamically coherent states using passive quadratic anyonic Hamiltonians.

We believe that this incursion into the different kinds of two-mode coherent states of bosonic anyons is enough to illustrate the drastic effects the anyonic exchange phase has on coherent state dynamics. As a last example, let us study the effect of a mirror, i.e.\  the network given by $PS_{1}(\pi/2)BS_{12}(\pi/2)PS_{2}(\pi/2)$, on single-mode coherent states:
\begin{subequations}
\begin{align*}
\ket{u;0}_{\beta}&=\hat{D}_{1}(u)\ket{0}_{\beta},\\
\ket{0;v}_{\beta}&=\hat{D}_{2}(v)\ket{0}_{\beta}.
\end{align*}
\end{subequations}

From our previous discussion it follows that the output states are given by
\begin{subequations}
\begin{align*}
\ket{0;u}^{1}_{\beta}&=N_{u}\sum_{k}e^{-i\varphi\frac{k(k-1)}{2}}\frac{(u\hat{\beta}^{\dagger}_{2})^{k}}{k!}\ket{0;0},\\
\ket{v;0}^{2}_{\beta}&=N_{v}\sum_{k}e^{i\varphi\frac{k(k-1)}{2}}\frac{(v\hat{\beta}^{\dagger}_{1})^{k}}{k!}\ket{0;0},
\end{align*}
\end{subequations}
where the first state is a reflection of $\ket{u;0}_{\beta}$ and the second a reflection of $\ket{0;v}_{\beta}$. 

For all values of $\varphi$, we can use the prescription in \cite{glauber_coherent_1963} to define the coherent-basis wave function 
\begin{equation*}
\Psi_{i}^{\varphi}(x,z)=\sum_{k}\psi^{i}_{k}(x)\frac{z^{k}}{\sqrt{k!}},
\end{equation*}
which, in our case, has
\begin{subequations}
\begin{equation*}
\psi^{1}_{k}(x)=N_{x}c^{*}_{k}\frac{\hat{x}^{k}}{\sqrt{k!}},
\end{equation*}
and
\begin{equation*}
\psi^{2}_{k}(x)=N_{x}c_{k}\frac{\hat{x}^{k}}{\sqrt{k!}},
\end{equation*}
\end{subequations}
and write the output states as a linear combination of the appropriate single-mode coherent states, in the form
\begin{subequations}
\begin{align*}
\ket{0;u}^{1}_{\beta}&=\frac{1}{\pi}\int_{\mathbf{C}}d^{2}z\ket{0;z}_{\beta}\Psi^{2}(b,z^{*})e^{-\frac{1}{2}\abs{z}^{2}},\\
\ket{v;0}^{2}_{\beta}&=\frac{1}{\pi}\int_{\mathbf{C}}d^{2}z\ket{z;0}_{\beta}\Psi^{1}(a,z^{*})e^{-\frac{1}{2}\abs{z}^{2}}.
\end{align*}
\end{subequations}

For further illustration, let us take the simplest case $\varphi=\pi$. We see that the mirror acts as
\begin{subequations}
\begin{align*}
\ket{0;u}^{1}_{\beta}&=\frac{1}{N\sqrt{2}}\left[(-1)^{\frac{1}{4}}\ket{0;-i u}_{\beta}-(-1)^{\frac{3}{4}}\ket{0;i u}_{\beta}\right],
\\
\ket{v;0}^{2}_{\beta}&=\frac{1}{N\sqrt{2}}\left[(-1)^{\frac{1}{4}}\ket{-i v;0}_{\beta}-(-1)^{\frac{3}{4}}\ket{i v;0}_{\beta}\right],
\end{align*}
\end{subequations}
up to a normalization factor $N$. Such states are called cat states, and they have multiple applications in quantum information theory, such as encoding logical qubits or as a resource for teleportation protocols \cite{jeong_efficient_2002,ralph_quantum_2003,gilchrist_schrodinger_2004,lund_fault-tolerant_2008,menicucci_fault-tolerant_2014,mirrahimi_dynamically_2014,ketterer_quantum_2016,lau_universal_2016}.

\section{Conclusion}\label{V}
We generalized the formalism of linear-optical networks, commonly applied to standard bosons and fermions, to fermionic and bosonic anyons on a 1D lattice. We showed that anyonic optical networks cannot be uniquely characterized by the matrix of single-mode transition amplitudes, in contrast to standard bosons and fermions. This is due, in part, to the existence of optical braiding networks, which take advantage of Aharonov-Bohm phases that arise in these anyonic systems. We also showed that the dynamics induced by anyonic optical elements preserve the characteristic bosonic and fermionic bunching behavior in the form of the Hong-Ou-Mandel effect and Pauli exclusion principle, respectively. 

From the anyonic linear-optical dynamics we were able to propose a deterministic entangling two-qubit gate applicable to both the bosonic and fermionic cases. This proposal requires one auxiliary mode populated with a single particle as a (reusable) resource, and generalizes previous results \cite{tosta_quantum_2019}. 

We also showed how to define coherent states of bosonic anyons, describing how these states can be classified into families that are not equivalent up to linear optical transformations. Finally, as an application, we discussed how an anyonic mirror acting on a single-mode coherent state can create cat states, which are often considered a valuable resource for quantum information processing in continuous variables. 

We believe the anyonic characteristics we have uncovered reveal interesting features on the uniqueness of standard fermionic and bosonic behavior. This framework might also be appropriate to study the role of fractional statistics over general types of quantum information tasks, as well as measures for resource states over such tasks. We also hope that this work can serve as a guide to building a framework for optical network models of more complex kinds of anyons, such as Fock parafermions \cite{cobanera_fock_2014} and non-abelian anyon ladders \cite{feiguin_interacting_2007}.

\section{Acknowledgements}
This work was supported by project Instituto Nacional de Ciência e Tecnologia de Informação Quântica (INCT-IQ/CNPq). EFG acknowledges the Portuguese funding institution FCT – Fundação para Ciência e Tecnologia via project CEECINST/00062/2018.
\appendix

\section{Mathematical proofs}
Here we give the proofs for various statements made throughout this paper.

\subsection{SU(2) algebra of quadratic anyonic operators} \label{app:su2}
\begin{theorem}
Define the operators
\begin{subequations}
\begin{align}
&J^{1}_{ij}=\frac{1}{2}(\hat{\chi}^{\dagger}_{i}\hat{\chi}_{j}+\hat{\chi}^{\dagger}_{j}\hat{\chi}_{i}),\\
&J^{2}_{ij}=\frac{-i}{2}(\hat{\chi}^{\dagger}_{i}\hat{\chi}_{j}-\hat{\chi}^{\dagger}_{j}\hat{\chi}_{i}),\\
&J^{3}_{ij}=\frac{1}{2}(\hat{\chi}^{\dagger}_{i}\hat{\chi}_{i}-\hat{\chi}^{\dagger}_{j}\hat{\chi}_{j}),
\end{align}
\end{subequations}
for fixed $i$ and $j$. Then
\begin{equation}
[J^{k}_{ij};J^{l}_{ij}]=i\epsilon_{klm}J^{m}_{ij},
\end{equation}
for all $k,l,m=1,...,3$
\end{theorem}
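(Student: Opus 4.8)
The plan is to reduce the six commutators to a handful of statements about ladder‑type operators and to invoke only the single‑site (anti)commutation relations, plus the fact that a number operator on one mode commutes with everything supported on another mode. Throughout I assume $i\neq j$: for $i=j$ one has $J^{2}_{ij}=J^{3}_{ij}=0$ and $J^{1}_{ij}=\hat n_i\neq 0$, so the identity actually fails there, and in the paper these operators only ever occur with $i\neq j$. Since $\epsilon_{klm}$ is totally antisymmetric it is enough to prove $[J^{1}_{ij};J^{2}_{ij}]=iJ^{3}_{ij}$ together with its two cyclic partners. I would pass to the operators $J^{\pm}_{ij}=J^{1}_{ij}\pm iJ^{2}_{ij}$, noting that $J^{+}_{ij}=\hat\chi^{\dagger}_{i}\hat\chi_{j}$, $J^{-}_{ij}=\hat\chi^{\dagger}_{j}\hat\chi_{i}$, and $J^{3}_{ij}=\tfrac12(\hat n_{i}-\hat n_{j})$; then the three target relations are equivalent to $[J^{3}_{ij};J^{\pm}_{ij}]=\pm J^{\pm}_{ij}$ and $[J^{+}_{ij};J^{-}_{ij}]=2J^{3}_{ij}$, from which the original form is recovered by taking linear combinations.

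Next I would collect the elementary facts needed. From the $i=j$ cases of Eqs.\ (\ref{eq-16}) and (\ref{eq-20}) — which, as noted in the main text, reduce to the ordinary bosonic/fermionic single‑mode relations — one gets $[\hat n_{i};\hat\chi^{\dagger}_{i}]=\hat\chi^{\dagger}_{i}$, $[\hat n_{i};\hat\chi_{i}]=-\hat\chi_{i}$, and $\hat\chi_{i}\hat\chi^{\dagger}_{i}=1+s\,\hat n_{i}$, with $s=+1$ for bosonic anyons and $s=-1$ for fermionic anyons. From the cross‑site relations a short computation gives $[\hat n_{i};\hat\chi^{\dagger}_{j}]=[\hat n_{i};\hat\chi_{j}]=[\hat n_{i};\hat n_{j}]=0$ for $i\neq j$; in that computation the exchange phases $e^{\pm i\varphi}$ (and, for fermionic anyons, the extra signs) cancel in pairs, so the result is insensitive to $\varphi$. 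Equivalently, this is immediate from $\hat n_{i}=J_{\varphi}(\hat x^{\dagger}_{i}\hat x_{i})$ and the fact that $J_{\varphi}$ is an algebra homomorphism, since $[\hat x^{\dagger}_{i}\hat x_{i};\hat x_{j}]=0$ for $i\neq j$.

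With these in hand the three relations are quick. For the first, $[J^{3}_{ij};J^{+}_{ij}]=\tfrac12\big([\hat n_{i};\hat\chi^{\dagger}_{i}\hat\chi_{j}]-[\hat n_{j};\hat\chi^{\dagger}_{i}\hat\chi_{j}]\big)=\tfrac12(\hat\chi^{\dagger}_{i}\hat\chi_{j}+\hat\chi^{\dagger}_{i}\hat\chi_{j})=J^{+}_{ij}$, using the same‑site and cross‑site identities above, and $[J^{3}_{ij};J^{-}_{ij}]=-J^{-}_{ij}$ follows symmetrically. For the last one I would write $[J^{+}_{ij};J^{-}_{ij}]=\hat\chi^{\dagger}_{i}\hat\chi_{j}\hat\chi^{\dagger}_{j}\hat\chi_{i}-\hat\chi^{\dagger}_{j}\hat\chi_{i}\hat\chi^{\dagger}_{i}\hat\chi_{j}$, substitute $\hat\chi_{j}\hat\chi^{\dagger}_{j}=1+s\hat n_{j}$ in the first term and $\hat\chi_{i}\hat\chi^{\dagger}_{i}=1+s\hat n_{i}$ in the second, and then move the resulting number operators outward through $\hat\chi^{\dagger}_{i},\hat\chi_{i}$ (resp.\ $\hat\chi^{\dagger}_{j},\hat\chi_{j}$) using the cross‑site commutativity; this yields $(\hat n_{i}+s\hat n_{i}\hat n_{j})-(\hat n_{j}+s\hat n_{i}\hat n_{j})=\hat n_{i}-\hat n_{j}=2J^{3}_{ij}$, the quartic pieces cancelling because $[\hat n_{i};\hat n_{j}]=0$. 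The only genuinely delicate point is verifying that the anyonic phases disappear from the cross‑site identities; once that is secured the whole argument is identical in form to the standard bosonic/fermionic case. Indeed, a one‑line alternative is that $J_{\varphi}$ sends $\hat x^{\dagger}_{i}\hat x_{j}\mapsto\hat\chi^{\dagger}_{i}\hat\chi_{j}$ and hence $J^{k}_{ij}(\hat x)\mapsto J^{k}_{ij}(\hat\chi)$, and being a homomorphism it transports the well‑known $\mathfrak{su}(2)$ relations verbatim.
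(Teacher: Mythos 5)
Your proof is correct, and its main route is at bottom the same computation as the paper's: the paper declares $[J^{1}_{ij};J^{3}_{ij}]$ and $[J^{2}_{ij};J^{3}_{ij}]$ trivial because $\hat{\chi}^{\dagger}_{k}\hat{\chi}_{k}$ are number operators, then expands $[J^{1}_{ij};J^{2}_{ij}]$ into quartic terms and cancels them using $\hat{\chi}_{k}\hat{\chi}^{\dagger}_{k}=1\pm\hat{n}_{k}$ — exactly the cancellation you perform for $[J^{+}_{ij};J^{-}_{ij}]=2J^{3}_{ij}$, just repackaged in the ladder-operator basis $J^{\pm}_{ij}$. Where you genuinely add value is twofold. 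First, you make explicit the cross-site identities $[\hat{n}_{i};\hat{\chi}^{\dagger}_{j}]=[\hat{n}_{i};\hat{\chi}_{j}]=0$ and verify that the exchange phases $e^{\pm i\varphi\epsilon_{i,j}}$ (and the fermionic signs) cancel in pairs; the paper buries this in the word ``trivial,'' yet it is the only place where the anyonic deformation could conceivably have spoiled the result, so spelling it out is the right instinct. Second, your one-line alternative via the generalized Jordan--Wigner map — $J_{\varphi}$ sends $\hat{x}^{\dagger}_{i}\hat{x}_{j}\mapsto\hat{\chi}^{\dagger}_{i}\hat{\chi}_{j}$ and, being a homomorphism, transports the standard $\mathfrak{su}(2)$ relations verbatim — is a genuinely different and cleaner argument that the paper does not use here, though it leans on the homomorphism property the paper asserts without proof. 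Your remark that the identity fails for $i=j$ is also correct and a useful caveat the paper omits (the theorem statement only says ``fixed $i$ and $j$'').
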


\begin{proof}
We just have to compute the three commutators $[J^{1}_{ij};J^{2}_{ij}]$, $[J^{1}_{ij};J^{3}_{ij}]$ and $[J^{2}_{ij};J^{3}_{ij}]$. The second and third ones are trivial, given that $\hat{\chi}_{i}^{\dagger}\hat{\chi}_{i}$ are number operators for their respective modes. Therefore, we only have to compute $[J^{1}_{ij};J^{2}_{ij}]$.

Expanding the terms in $[J^{1}_{ij};J^{2}_{ij}]$ we have that $J^{1}_{ij}J^{2}_{ij}$ is given by
\small
\begin{equation*}
\frac{-i}{4}\left[(\hat{\chi}^{\dagger}_{i}\hat{\chi}_{j})^{2}-(\hat{\chi}^{\dagger}_{j}\hat{\chi}_{i})^{2}\right]-\left[\hat{\chi}^{\dagger}_{i}(\hat{\chi}_{j}\hat{\chi}^{\dagger}_{j})\hat{\chi}_{i}-\hat{\chi}^{\dagger}_{j}(\hat{\chi}_{i}\hat{\chi}^{\dagger}_{i})\hat{\chi}_{j}\right],
\end{equation*}
\normalsize
while $J^{2}_{ij}J^{1}_{ij}$ is given by
\small
\begin{equation*}
\frac{-i}{4}\left[(\hat{\chi}^{\dagger}_{i}\hat{\chi}_{j})^{2}-(\hat{\chi}^{\dagger}_{j}\hat{\chi}_{i})^{2}\right]+\left[\hat{\chi}^{\dagger}_{i}(\hat{\chi}_{j}\hat{\chi}^{\dagger}_{j})\hat{\chi}_{i}-\hat{\chi}^{\dagger}_{j}(\hat{\chi}_{i}\hat{\chi}^{\dagger}_{i})\hat{\chi}_{j}\right].
\end{equation*}
\normalsize
Using the commutation relations for either bosonic or fermionic anyons we obtain
\begin{align*}
[J^{1}_{ij};J^{2}_{ij}]&=\frac{i}{2}\left[\hat{\chi}^{\dagger}_{i}\hat{\chi}_{i}(\hat{\chi}_{j}\hat{\chi}^{\dagger}_{j})-\hat{\chi}^{\dagger}_{j}\hat{\chi}_{j}(\hat{\chi}_{i}\hat{\chi}^{\dagger}_{i})\right] \notag\\
&=\frac{i}{2}(\hat{\chi}^{\dagger}_{i}\hat{\chi}_{i}-\hat{\chi}^{\dagger}_{j}\hat{\chi}_{j})=iJ^{3
}_{ij},
\end{align*}
as desired.
\end{proof}

\subsection{Propagation identities}\label{app:propagation}
By using the $SU(2)$ structure of bilinear operators we can effectively solve the dynamics of creation and annihilation operators for bosonic anyons under the action of a beam splitter.
\begin{theorem}
Let $G$ be given by
\begin{equation*}
\hat{G}^{n\varphi}_{ij}(\theta)=e^{in\varphi{J}^{3}_{ij}}e^{i\theta(2J^{1}_{ij})}e^{-in\varphi{J}^{3}_{ij}}.
\end{equation*}
then it follows that
\begin{subequations}
\begin{align*}
\hat{G}^{n\varphi}_{ij}(\theta)\hat{\beta}^{\dagger}_{i}=&(\cos{\theta}\hat{\beta}^{\dagger}_{i}+ie^{-in\varphi}\sin{\theta}\hat{\beta}^{\dagger}_{j})\hat{G}^{(n+1)\varphi}_{ij}(\theta),\\
\hat{G}^{n\varphi}_{ij}(\theta)\hat{\beta}^{\dagger}_{j}=&(\cos{\theta}\hat{\beta}^{\dagger}_{j}+ie^{in\varphi}\sin{\theta}\hat{\beta}^{\dagger}_{i})\hat{G}^{(n+1)\varphi}_{ij}(\theta)
\end{align*}
and if $i<k<j$
\begin{equation*}
\hat{G}^{n\varphi}_{ij}(\theta)\hat{\beta}^{\dagger}_{k}=\hat{\beta}^{\dagger}_{k}\hat{G}^{(n+2)\varphi}_{ij}(\theta).
\end{equation*}
\end{subequations}
\end{theorem}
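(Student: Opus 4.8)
The plan is to reduce everything to computing how the single exponential $e^{i\theta M_{n}}$ acts on creation operators, where $M_{n}:=e^{in\varphi J^{3}_{ij}}(2J^{1}_{ij})e^{-in\varphi J^{3}_{ij}}$. Using the $SU(2)$ relations from Appendix~\ref{app:su2} one has $M_{n}=2[\cos(n\varphi)J^{1}_{ij}-\sin(n\varphi)J^{2}_{ij}]=e^{in\varphi}\hat{\beta}^{\dagger}_{i}\hat{\beta}_{j}+e^{-in\varphi}\hat{\beta}^{\dagger}_{j}\hat{\beta}_{i}$, and since conjugation commutes with exponentiation, $\hat{G}^{n\varphi}_{ij}(\theta)=e^{i\theta M_{n}}$; in particular $M_{0}=2J^{1}_{ij}$. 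So it suffices to evaluate $e^{i\theta M_{n}}\hat{\beta}^{\dagger}_{i}$, $e^{i\theta M_{n}}\hat{\beta}^{\dagger}_{j}$ and $e^{i\theta M_{n}}\hat{\beta}^{\dagger}_{k}$ for $i<k<j$.

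First I would establish the \emph{first-order} (linear-in-$\theta$) relations. Moving one creation operator past $M_{n}$ is a short computation using only the deformed relations~(\ref{eq-16}): the same-site commutator $[\hat{\beta}_{i},\hat{\beta}^{\dagger}_{i}]=1$ together with the cross-site rules (for $i<j$: $\hat{\beta}_{j}\hat{\beta}^{\dagger}_{i}=e^{i\varphi}\hat{\beta}^{\dagger}_{i}\hat{\beta}_{j}$, $\hat{\beta}^{\dagger}_{i}\hat{\beta}^{\dagger}_{j}=e^{i\varphi}\hat{\beta}^{\dagger}_{j}\hat{\beta}^{\dagger}_{i}$, and the analogues involving $k$). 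The outcome, for every integer $n$, is
\begin{equation*}
M_{n}\hat{\beta}^{\dagger}_{i}=\hat{\beta}^{\dagger}_{i}M_{n+1}+e^{-in\varphi}\hat{\beta}^{\dagger}_{j},\qquad M_{n}\hat{\beta}^{\dagger}_{j}=\hat{\beta}^{\dagger}_{j}M_{n+1}+e^{in\varphi}\hat{\beta}^{\dagger}_{i},\qquad M_{n}\hat{\beta}^{\dagger}_{k}=\hat{\beta}^{\dagger}_{k}M_{n+2}.
\end{equation*}
The $n=0$ cases of the first two are exactly the relations already derived in Section~\ref{sec:II}; the doubled index shift in the third (the source of the Aharonov--Bohm phase) is because $\hat{\beta}^{\dagger}_{k}$ picks up a factor $e^{i\varphi}$ on passing each of $\hat{\beta}^{\dagger}_{i}$ and $\hat{\beta}_{j}$, hence $e^{2i\varphi}$ overall in the first term of $M_{n}$ and $e^{-2i\varphi}$ in the second.

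Next I would diagonalize the left action of $M_{n}$ on the span of $\{\hat{\beta}^{\dagger}_{i},\hat{\beta}^{\dagger}_{j}\}$. The combinations $\hat{\delta}^{(n)}_{\pm}:=e^{in\varphi}\hat{\beta}^{\dagger}_{i}\pm\hat{\beta}^{\dagger}_{j}$ obey $M_{n}\hat{\delta}^{(n)}_{\pm}=\hat{\delta}^{(n)}_{\pm}(M_{n+1}\pm1)$ directly from the relations above, so $M_{n}^{p}\hat{\delta}^{(n)}_{\pm}=\hat{\delta}^{(n)}_{\pm}(M_{n+1}\pm1)^{p}$ and, since $M_{n+1}$ commutes with the scalar,
\begin{equation*}
e^{i\theta M_{n}}\hat{\delta}^{(n)}_{\pm}=\hat{\delta}^{(n)}_{\pm}\,e^{i\theta(M_{n+1}\pm1)}=e^{\pm i\theta}\,\hat{\delta}^{(n)}_{\pm}\,\hat{G}^{(n+1)\varphi}_{ij}(\theta).
\end{equation*}
Solving $\hat{\beta}^{\dagger}_{i}=\tfrac{1}{2}e^{-in\varphi}(\hat{\delta}^{(n)}_{+}+\hat{\delta}^{(n)}_{-})$ and $\hat{\beta}^{\dagger}_{j}=\tfrac{1}{2}(\hat{\delta}^{(n)}_{+}-\hat{\delta}^{(n)}_{-})$ and recombining yields exactly $e^{i\theta M_{n}}\hat{\beta}^{\dagger}_{i}=(\cos\theta\,\hat{\beta}^{\dagger}_{i}+ie^{-in\varphi}\sin\theta\,\hat{\beta}^{\dagger}_{j})\hat{G}^{(n+1)\varphi}_{ij}(\theta)$ and $e^{i\theta M_{n}}\hat{\beta}^{\dagger}_{j}=(\cos\theta\,\hat{\beta}^{\dagger}_{j}+ie^{in\varphi}\sin\theta\,\hat{\beta}^{\dagger}_{i})\hat{G}^{(n+1)\varphi}_{ij}(\theta)$, the first two propagation identities. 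For the spectator mode, $M_{n}^{p}\hat{\beta}^{\dagger}_{k}=\hat{\beta}^{\dagger}_{k}M_{n+2}^{p}$ by trivial induction, hence $e^{i\theta M_{n}}\hat{\beta}^{\dagger}_{k}=\hat{\beta}^{\dagger}_{k}e^{i\theta M_{n+2}}=\hat{\beta}^{\dagger}_{k}\hat{G}^{(n+2)\varphi}_{ij}(\theta)$.

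The argument is essentially mechanical; the only steps that take a little care are (i) carrying out the commutations in~(\ref{eq-16}) cleanly enough to see the index shift $n\mapsto n+1$ in the first two relations and $n\mapsto n+2$ in the third, and (ii) spotting the eigen-combinations $\hat{\delta}^{(n)}_{\pm}$, which is precisely what collapses $e^{i\theta M_{n}}$ into $\cos\theta$ (resp.\ $i\sin\theta$) times the shifted operator $\hat{G}^{(n+1)\varphi}_{ij}(\theta)$ rather than a genuinely nonlinear expression. As is standard in this setting, all power-series and exponential manipulations above are read formally, or equivalently on the dense subspace of states of finite total particle number.
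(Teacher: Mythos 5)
Your proof is correct and follows essentially the same route as the paper's: both rest on the first-order relations extracted from the deformed commutation relations and on the $\pm$ combinations of $\hat{\beta}^{\dagger}_{i}$ and $\hat{\beta}^{\dagger}_{j}$ that shift the exponentiated generator by $\pm 1$, followed by recombination into $\cos\theta$ and $i\sin\theta$ terms. The only differences are organizational --- by carrying the index $n$ from the outset (through $M_{n}$ and $\hat{\delta}^{(n)}_{\pm}$) you make the paper's closing induction on $n$ unnecessary, and you actually carry out the spectator-mode computation for $i<k<j$, which the paper only asserts "is easy to see from the calculation for fermionic anyons."
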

\begin{proof}
The $i<k<j$ case is easy to see from the calculation for fermionic anyons. For the cases when $k=i$ or $k=j$, let us compute the commutator between the beam splitter Hamiltonian ${H}^{BS}_{ij}=\hat{\beta}^{\dagger}_{i}\hat{\beta}_{j}+\hat{\beta}^{\dagger}_{j}\hat{\beta}_{i}=2{J}^{1}_{ij}$ and the creation operators $\hat{\beta}^{\dagger}_{i}$ and $\hat{\beta}^{\dagger}_{j}$
\begin{subequations}
\begin{align*}
[(2{J}^{1}_{ij}),\hat{\beta}^{\dagger}_{i}]=&\hat{\beta}^{\dagger}_{i}\left[(\cos{\varphi}-1)(2{J}^{1}_{ij})-\sin{\varphi}(2{J}^{2}_{ij})\right]+\hat{\beta}^{\dagger}_{j},\\
[(2{J}^{1}_{ij}),\hat{\beta}^{\dagger}_{j}]=&\hat{\beta}^{\dagger}_{j}\left[(\cos{\varphi}-1)(2{J}^{1}_{ij})-\sin{\varphi}(2{J}^{2}_{ij})\right]+\hat{\beta}^{\dagger}_{i},
\end{align*}
\end{subequations}
this implies that
\begin{subequations}
\begin{align*}
(2{J}^{1}_{ij})\hat{\beta}^{\dagger}_{i}=&\hat{\beta}^{\dagger}_{i}\left[2(\cos{\varphi}{J}^{1}_{ij}-\sin{\varphi}{J}^{2}_{ij})\right]+\hat{\beta}^{\dagger}_{j},\\
(2{J}^{1}_{ij})\hat{\beta}^{\dagger}_{j}=&\hat{\beta}^{\dagger}_{j}\left[2(\cos{\varphi}{J}^{1}_{ij}-\sin{\varphi}{J}^{2}_{ij})\right]+\hat{\beta}^{\dagger}_{i}.
\end{align*}
\end{subequations}
Combining the two equations above we get
\begin{subequations}
\begin{align*}
(2{J}^{1}_{ij})(\hat{\beta}^{\dagger}_{i}+\hat{\beta}^{\dagger}_{j})=&(\hat{\beta}^{\dagger}_{i}+\hat{\beta}^{\dagger}_{j})\left[2(\cos{\varphi}{J}^{1}_{ij}-\sin{\varphi}{J}^{2}_{ij})+1\right],\\
(2{J}^{1}_{ij})(\hat{\beta}^{\dagger}_{i}-\hat{\beta}^{\dagger}_{j})=&(\hat{\beta}^{\dagger}_{i}-\hat{\beta}^{\dagger}_{j})\left[2(\cos{\varphi}{J}^{1}_{ij}-\sin{\varphi}{J}^{2}_{ij})-1\right].
\end{align*}
\end{subequations}
Multiplying equations above by powers of $(2{J}^{1}_{ij})$ we find the  expression for commuting a beam splitter with a combination of creation operators
\begin{subequations}
\begin{align*}
e^{i\theta(2{J}^{1}_{ij})}(\hat{\beta}^{\dagger}_{i}+\hat{\beta}^{\dagger}_{j})&=(\hat{\beta}^{\dagger}_{i}+\hat{\beta}^{\dagger}_{j})e^{i\theta\left[2(\cos{\varphi}{J}^{1}_{ij}-\sin{\varphi}{J}^{2}_{ij})+1\right]}, \\
e^{i\theta(2{J}^{1}_{ij})}(\hat{\beta}^{\dagger}_{i}-\hat{\beta}^{\dagger}_{j})&=(\hat{\beta}^{\dagger}_{i}-\hat{\beta}^{\dagger}_{j})e^{i\theta\left[2(\cos{\varphi}{J}^{1}_{ij}-\sin{\varphi}{J}^{2}_{ij})-1\right]}.
\end{align*}
\end{subequations}
Finally, this leads to
\begin{subequations}
\begin{align*}
e^{i\theta(2{J}^{1}_{ij})}\hat{\beta}^{\dagger}_{i}&=(\cos{\theta}\hat{\beta}^{\dagger}_{i}+i\sin{\theta}\hat{\beta}^{\dagger}_{j})e^{i\theta\left[2(\cos{\varphi}{J}^{1}_{ij}-\sin{\varphi}{J}^{2}_{ij})\right]},\\
e^{i\theta(2{J}^{1}_{ij})}\hat{\beta}^{\dagger}_{j}&=(i\sin{\theta}\hat{\beta}^{\dagger}_{i}+\cos{\theta}\hat{\beta}^{\dagger}_{j})e^{i\theta\left[2(\cos{\varphi}{J}^{1}_{ij}-\sin{\varphi}{J}^{2}_{ij})\right]}.
\end{align*}
\end{subequations}
To cast this result in a more illuminating form, recall that
\begin{equation*}
\cos{\varphi}{J}^{1}_{ij}-\sin{\varphi}{J}^{2}_{ij}=e^{i\varphi{J}^{3}_{ij}}{J}^{1}_{ij}e^{-i\varphi{J}^{3}_{ij}},
\end{equation*}
and, therefore,
\begin{equation*}
e^{i\theta\left[2(\cos{\varphi}{J}^{1}_{ij}-\sin{\varphi}{J}^{2}_{ij}))\right]}=e^{i\varphi{J}^{3}_{ij}}e^{i\theta(2{J}^{1}_{ij})}e^{-i\varphi{J}^{3}_{ij}},
\end{equation*}
which gives us
\begin{subequations}
\begin{align*}
\hat{G}^{0}_{ij}(\theta)\hat{\beta}^{\dagger}_{i}=&(\cos{\theta}\hat{\beta}^{\dagger}_{i}+i\sin{\theta}\hat{\beta}^{\dagger}_{j})\hat{G}^{\varphi}_{ij}(\theta),\\
\hat{G}^{0}_{ij}(\theta)\hat{\beta}^{\dagger}_{j}=&(\cos{\theta}\hat{\beta}^{\dagger}_{j}+i\sin{\theta}\hat{\beta}^{\dagger}_{i})\hat{G}^{\varphi}_{ij}(\theta),
\end{align*}
\end{subequations}
by the definition of $G$. Then it is not hard to see, by induction on $n$, that
\begin{align*}
\hat{G}^{n\varphi}_{ij}(\theta)\hat{\beta}^{\dagger}_{i}=&(\cos{\theta}\hat{\beta}^{\dagger}_{i}+ie^{-in\varphi}\sin{\theta}\hat{\beta}^{\dagger}_{j})\hat{G}^{(n+1)\varphi}_{ij}(\theta),\\
\hat{G}^{n\varphi}_{ij}(\theta)\hat{\beta}^{\dagger}_{j}=&(\cos{\theta}\hat{\beta}^{\dagger}_{j}+ie^{in\varphi}\sin{\theta}\hat{\beta}^{\dagger}_{i})\hat{G}^{(n+1)\varphi}_{ij}(\theta),
\end{align*}
thus proving the theorem.
\end{proof}

\subsection{Generalized binomial identities}
Applying the propagation identities to powers of creation operators leads to the necessity of calculating generalized binomial identities for anyons.

\begin{theorem}
Let $a,b$ be arbitrary complex numbers and $i<j$, then
\begin{subequations}\label{eq-A18}
\begin{equation}\label{eq-A18a}
\prod_{k=0}^{n-1}(a\hat{\beta}^{\dagger}_{i}+e^{-ik\varphi}b\hat{\beta}^{\dagger}_{j})=e^{-i\varphi\frac{n(n-1)}{2}}\prod_{k=0}^{n-1}(e^{ik\varphi}a\hat{\beta}^{\dagger}_{i}+b\hat{\beta}^{\dagger}_{j}),
\end{equation}
where
\begin{equation}\label{eq-A18b}
\prod_{k=0}^{n-1}(e^{ik\varphi}a\hat{\beta}^{\dagger}_{i}+b\hat{\beta}^{\dagger}_{j})=\sum_{l=0}^{n}\binom{n}{l}e^{i\varphi\frac{l(l-1)}{2}}(a\hat{\beta}^{\dagger}_{i})^{l}(b\hat{\beta}^{\dagger}_{j})^{n-l}.
\end{equation}
\end{subequations}
\end{theorem}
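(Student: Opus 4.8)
The plan is to reduce both identities to the single $q$-commutation rule obeyed by the creation operators, and then close with a one-step induction. For $i<j$ we have $\epsilon_{i,j}=1$, so Eq.~(\ref{eq-16c}) gives $\hat{\beta}^{\dagger}_{i}\hat{\beta}^{\dagger}_{j}=e^{i\varphi}\hat{\beta}^{\dagger}_{j}\hat{\beta}^{\dagger}_{i}$. Setting $q=e^{i\varphi}$, $A=a\hat{\beta}^{\dagger}_{i}$ and $B=b\hat{\beta}^{\dagger}_{j}$ (the scalars $a,b$ being central), this reads $AB=qBA$, hence $BA=q^{-1}AB$ and, by iteration, $B^{m}A=q^{-m}AB^{m}$ for every $m\geq0$. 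That one relation carries the entire argument.

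Equation~(\ref{eq-A18a}) is then purely a matter of extracting scalars: each factor on its left-hand side equals $a\hat{\beta}^{\dagger}_{i}+e^{-ik\varphi}b\hat{\beta}^{\dagger}_{j}=e^{-ik\varphi}\big(e^{ik\varphi}a\hat{\beta}^{\dagger}_{i}+b\hat{\beta}^{\dagger}_{j}\big)$, and since the phases $e^{-ik\varphi}$ commute with all operators they collect in front of the product as $\prod_{k=0}^{n-1}e^{-ik\varphi}=e^{-i\varphi\sum_{k=0}^{n-1}k}=e^{-i\varphi n(n-1)/2}$, which is exactly the stated prefactor.

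The substantive claim is Eq.~(\ref{eq-A18b}), which I would prove by induction on $n$ in the operator form $P_{n}:=\prod_{k=0}^{n-1}(q^{k}A+B)=\sum_{l=0}^{n}\binom{n}{l}q^{l(l-1)/2}A^{l}B^{n-l}$, with the product ordered so that the $k=0$ bracket stands leftmost --- this is the order produced by iterating the propagation identities Eqs.~(\ref{eq-32a})--(\ref{eq-32b}) on $(\hat{\beta}^{\dagger}_{i})^{n}\ket{0}_{\beta}$, and it is the reason ordinary rather than Gaussian binomial coefficients appear (the $A$'s mutually commute, and so do the $B$'s, so no internal $q$-powers survive). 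The base case $n=1$ is $A+B$ on both sides. For the step, multiply the inductive hypothesis on the right by $(q^{n-1}A+B)$ and normal-order each resulting monomial; the only rearrangement needed is $A^{l}B^{n-1-l}A=q^{-(n-1-l)}A^{l+1}B^{n-1-l}$, so the term coming from $q^{n-1}A$ picks up the net factor $q^{n-1}q^{-(n-1-l)}=q^{l}$. Shifting the index $l\mapsto l+1$ in that term and using $\tfrac{l(l-1)}{2}+l=\tfrac{(l+1)l}{2}$, the ``raising'' and ``non-raising'' contributions carry the same power $q^{l(l-1)/2}$ at each monomial $A^{l}B^{n-l}$, while their coefficients combine by Pascal's rule $\binom{n-1}{l-1}+\binom{n-1}{l}=\binom{n}{l}$. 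Putting back $A=a\hat{\beta}^{\dagger}_{i}$, $B=b\hat{\beta}^{\dagger}_{j}$, $q=e^{i\varphi}$ yields Eq.~(\ref{eq-A18b}).

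I do not anticipate a real obstacle: all the mathematical content sits in the $q$-commutation relation, and the rest is bookkeeping. The one thing to be careful about is keeping the product-ordering convention consistent between the statement, its use in the main text, and the induction, and tracking the $q$-exponents through the index shift so that the two halves of the inductive step recombine cleanly --- a misplaced factor of $q$ or a reversed ordering is essentially the only way the computation could fail.
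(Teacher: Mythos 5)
Your proof is correct and follows essentially the same route as the paper's: Eq.~(\ref{eq-A18a}) by extracting the scalar phases $e^{-ik\varphi}$ and summing the arithmetic series, and Eq.~(\ref{eq-A18b}) by induction on $n$, right-multiplying by the new factor, normal-ordering via $\hat{\beta}^{\dagger}_{j}{}^{m}\hat{\beta}^{\dagger}_{i}=e^{-im\varphi}\hat{\beta}^{\dagger}_{i}\hat{\beta}^{\dagger}_{j}{}^{m}$, shifting the index, and closing with Pascal's rule. The only difference is presentational --- you package the exchange relation as $AB=qBA$ before running the identical computation --- so there is nothing to add.
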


\begin{proof}
The first equality is easy to prove, since
\begin{equation*}
a\hat{\beta}^{\dagger}_{i}+e^{-ik\varphi}b\hat{\beta}^{\dagger}_{j}=e^{-ik\varphi}(e^{ik\varphi}a\hat{\beta}^{\dagger}_{i}+b\hat{\beta}^{\dagger}_{j}),
\end{equation*}
which implies 
\begin{equation*}
\prod_{k=0}^{n-1}(a\hat{\beta}^{\dagger}_{i}+e^{-ik\varphi}b\hat{\beta}^{\dagger}_{j})=\left[\prod_{k=0}^{n-1}e^{-i\varphi k}\right]\prod_{k=0}^{n-1}(e^{ik\varphi}a\hat{\beta}^{\dagger}_{i}+b\hat{\beta}^{\dagger}_{j}).
\end{equation*}
Then, just notice that 
\begin{equation*}
\prod_{k=0}^{n-1}e^{-i\varphi k}=e^{i\varphi\sum_{n-1}^{k=0}k}=e^{i\varphi\frac{n(n-1)}{2}}.
\end{equation*}

We prove the second identity using induction over $n$. The identity is trivial for $n=1$. Suppose it is valid for the $n$th case. Then the $n+1$th case can be written as 
\begin{equation*}
\sum_{l=0}^{n}\binom{n}{l}e^{i\varphi\frac{l(l-1)}{2}}(a\hat{\beta}^{\dagger}_{i})^{l}(b\hat{\beta}^{\dagger}_{j})^{n-l}(e^{in\varphi}a\hat{\beta}^{\dagger}_{i}+b\hat{\beta}^{\dagger}_{j}).
\end{equation*}
Expanding the factors and rearranging the operators such that they are normally ordered we obtain
\begin{multline*}
\sum_{l=0}^{n}\binom{n}{l}e^{i\varphi\frac{l(l+1)}{2}}(a\hat{\beta}^{\dagger}_{i})^{l+1}(b\hat{\beta}^{\dagger}_{j})^{n-l}+\\
+\sum_{l=0}^{n}\binom{n}{l}e^{i\varphi\frac{l(l-1)}{2}}(a\hat{\beta}^{\dagger}_{i})^{l}(b\hat{\beta}^{\dagger}_{j})^{n-l+1}.
\end{multline*}
We proceed by separating terms $l=n$ from the first sum and $l=0$ from the second one, writing them explicitly. After that, we relabel $k=l+1$ in the first sum and $k=l$ in the second, obtaining
\begin{multline*}
e^{i\varphi\frac{n(n+1)}{2}}(a\hat{\beta}^{\dagger}_{i})^{n+1}+\\
+\sum_{k=1}^{n}\left[\binom{n}{k-1}+\binom{n}{k}\right]e^{i\varphi\frac{k(k-1)}{2}}(a\hat{\beta}^{\dagger}_{i})^{k}(b\hat{\beta}^{\dagger}_{j})^{(n+1)-k}+\\
+(b\hat{\beta}^{\dagger}_{j})^{n+1}.
\end{multline*}
Finally we use that
\begin{equation*}
\binom{n}{k-1}+\binom{n}{k}=\binom{n+1}{k},
\end{equation*}
from which it follows the identity for the $n+1$th case, completing the induction.
\end{proof}
\end{document}